\def\ps@pprintTitle{%
 \let\@oddhead\@empty
 \let\@evenhead\@empty
 \def\@oddfoot{\centerline{\thepage}}%
 \let\@evenfoot\@oddfoot}
\date{}
\def\texpsfig#1#2#3{\vbox{\kern #3\hbox{\includegraphics{#1}\kern #2}}\typeout{(#1)}}
\theoremstyle{plain}
\newtheorem{thm}{Theorem}[section]
\newtheorem{dfn}[thm]{dfn}
\newtheorem{rem}{rem}[section]
\theoremstyle{rem}
\theoremstyle{plain}
\newtheorem{lem}[thm]{lem}
\theoremstyle{dfn}
\newcommand{\e}{{\rm e}}        % "e" number
\def\R{\mathbb{ R}}             % Real number
\def\E{\mathbb{ E}}             % Expectation
\def\Q{\mathbb{ Q}}             % Measure Q
\def\P{\mathbb{ P}}             % Measure P
\def\Cov{\mathrm{\mathbb{C}ov}}
\def\corr{\mathrm{corr}}            % Covariance
\def\Var{\mathrm{\mathbb{V}ar}}   % Variance
\renewcommand{\d}{{\rm d}}      % straight "d" in in integration and ODEs, \int_a^b f(x)\d x
\def\dt{{\rm d}t}
\def\ds{{\rm d}s}
\def\e{{\mathrm{e}}}
\def\1{{\mathbbm{1}}}            % Indicator function
\def\PnL{\mathrm{PnL}}         
\def\CTD{\mathrm{CTD}}         
\def\CF{\mathrm{cf}}        
\def\det{\mathrm{det}}     
\def\stoch{\mathrm{stoch}} 
\def\none{\mathrm{none}}
\theoremstyle{plain}% default
\numberwithin{equation}{section}	     %Equation numbering per section
\title{Sensitivities and Hedging of the Collateral Choice Option}
\begin{document}

\author[1]{Griselda Deelstra}
\ead{Griselda.Deelstra@ulb.be}
\author[2,3]{Lech A.\ Grzelak}
\ead{L.A.Grzelak@uu.nl}
\author[1]{Felix L.\ Wolf}
\ead{Felix.Wolf@ulb.be}

%\cortext[cor1]{Corresponding author at Delft Institute of Applied Mathematics, TU Delft, Delft, the Netherlands.}
\address[1]{Department of Mathematics, Universit\'e libre de Bruxelles, Brussels, Belgium}
\address[2]{Mathematical Institute, Utrecht University, Utrecht, the Netherlands}
\address[3]{Rabobank, Utrecht, the Netherlands}

\begin{abstract}\small{
The collateral choice option allows a collateral-posting party the opportunity to change the type of security in which the collateral is deposited. Due to non-zero collateral basis spreads, this optionality significantly impacts asset valuation.
Because of the complexity of valuing the option, many practitioners resort to deterministic assumptions on the collateral rates. 
In this article, we focus on a valuation model of the collateral choice option based on stochastic dynamics.
Intrinsic differences in the resulting collateral choice option valuation and its implications for collateral management are presented. 
We obtain sensitivities of the collateral choice option price under both the deterministic and the stochastic model, and we show that the stochastic model attributes risks to all involved collateral currencies. Besides an inability to capture volatility effects, the deterministic model exhibits a digital structure in which only the cheapest-to-deliver currency influences the valuation at a given time. 
We further consider hedging an asset with the collateral choice option by a portfolio of domestic and foreign zero-coupon bonds that do not carry the collateral choice option. We propose static hedging strategies based on the crossing times of the deterministic model and based on variance-minimization under the stochastic model. We show how the weights of this model can be explicitly determined with the semi-analytical common factor approach and we show in numerical experiments that this strategy offers good hedging performance under minimized variance.
}
\end{abstract}

\begin{keyword}{\small{Collateral Choice Option, Cheapest-to-deliver Collateral, Currency Spreads, CSA, Minimal-variance Hedging, Static Hedging}}
\end{keyword}

\maketitle

{\let\thefootnote\relax\footnotetext{The views expressed in this paper are the personal views of the authors and do not necessarily reflect the views or policies of their current or past employers.}}

%%%%%%%%%%%%%%%%%%%%%%%%%%%%%%%%%%%%%%%%%%%%%%%%%%%%%%%%%%%%%%%%%%%%%%%%%%%%%%%%
%Section 1: Introduction
%%%%%%%%%%%%%%%%%%%%%%%%%%%%%%%%%%%%%%%%%%%%%%%%%%%%%%%%%%%%%%%%%%%%%%%%%%%%%%%%
\section{Introduction}
The collateral choice option describes an optionality in collateralized assets, where a basket of admissible collateral securities is available to the collateral posting party. 
Collateral posters are incentivized to optimize their choice of collateral security, as posting collateral is associated with procurement costs and returns generated by the collateral are transferred to the posting party. 
\par
{This article outlines} existing stochastic and deterministic models used in {evaluating} the collateral choice option. We show that the prices obtained from the models fundamentally differ and that the models do not exhibit the same type of sensitivities in their option {prices}. We further consider the hedging problem of an asset {equipped} with the collateral choice option when the hedging instruments may not carry the option themselves. We propose hedging strategies based on the stochastic and deterministic collateral choice models.
\par
We focus on the popular case of cash collateral with a choice between multiple available collateral currencies, but conclusions can be drawn for alternative securities like bonds or other assets.
In the choice between multiple collateral currencies, the \emph{collateral rates} of each currency are contractually determined {with} distinct interest rates. Based on the returns of these rates, a hierarchy of collateral currencies is implied{,} and an optimal choice of collateral currency, known as the \emph{cheapest-to-deliver} collateral currency, arises. Stochastic models to describe this choice and {its} valuation have been previously treated in \cite{PiterbargAntonov,PiterbargFunding,BieleckiRutkowski,FujiiRisk,Macey2011,McCloud,PiterbargCooking,SankovichZhu}. Within this article, we use the stochastic common factor collateral choice model proposed {by} \cite{WolfCF}.
\par
The impact of the collateral choice option is closely linked to the asset on which it acts. In full generality, the value of the collateral choice option is determined by interactions between the collateral currencies and the collateralized assets, {and} also by the collateralization modalities themselves. 
%In some cases, these modalities are explicitly outlined in the collateral support annex (CSA) agreement,
%{reference (``that'' book) -- WHICH ONE}
%but especially in the case of legacy trades exact details about the execution, under which conditions collateral may be exchanged, are not always (unambiguously) stated.
{For example,} collateral posting modalities may restrict the collateral poster to only change the collateral currency whenever the collateral account passes through zero (i.e.\ when the collateral account vanishes), or that previously posted collateral may not be exchanged, and only additionally posed collateral can be of different currency. The resulting fragmentation of the collateral account is known as \emph{sticky collateral} and {is} treated in \cite{PiterbargSticky}.
\par
The above restrictions result in an asset-specific collateral choice option, meaning the {option's price must} be individually calculated for each asset. In light of {a} large {number} of eligible assets, this is not computationally feasible for many Financial actors and motivates a model where the collateral choice option is valued separately from the affected assets. 
A common assumption in the literature which achieves this is called the \emph{free (collateral) substitution}. In this case, the entire collateral can be exchanged at any time. This makes it possible to persistently post the collateral in the currency with the highest (FX-adjusted) collateral rate at any time. Consequently, the optimal collateral rate is given by the maximum of all available rates.
\par
In this article, we further assume \emph{perfect collateralization}, meaning that the collateral account is updated in continuous time without frictions. This is considered a reasonable assumption under daily margin calls, {see e.g.\ }\cite{FujiiRisk}.
\par
{Moreover, a} perfectly collateralized asset is free from default risks as outstanding obligations are fully covered by the collateral account at any time, which simplifies discounting back to a single curve framework where only the funding value has to be considered. For this setting, it is shown in \cite{PiterbargCooking} that an adaption of the standard risk-free model dictates the collateral rate to be used for discounting.
\par
Thus, in the simplest case of one available collateral currency with the collateral rate $r_0$, a perfectly collateralized asset $U$ with a singular cash flow at time $T$ has, at time $t_0 \leq t \leq T$, the value
\begin{equation}\label{eq:singlecoll}
U(t) = \E_t^{\Q_0}\bigl[\e^{-\int_t^T r_0(s)\d s} U(T)\bigr],
\end{equation}
where $\Q_0$ is the appropriate measure {linked} with num\'eraire $B(t) = \exp(\int_{t_0}^t r_0(s)\d s)$ under which the collateralized asset is priced.\footnote{Throughout this article, we denote by $\E_t$ the conditional expectation with respect to the filtration at time $t$. The expectation with respect to initial time $t_0$ is indicated by $\E$.}
\par
We {equip} the above asset with the collateral choice option and free substitution rights between $N+1$ collateral currencies, where collateral posted in currency $i\in\{0, \dots, N\}$ accrues interest at rate $r_i$. We assume that the foreign interest rates $r_i$, $i\geq 1$ are already FX-adjusted, {meaning} that they are denoted in the domestic currency according to the exchanged prices of the {actual} foreign rates. This allows for a consistent valuation under the domestic measure $\Q_0$ associated with the domestic interest and collateral rate $r_0$.
%Independent of the collateral choice, the asset itself remains its value and payoff in what we call the \emph{domestic} currency, associated with the rate $r_0$. 
%For further clarity, we assume that the non-domestic collateral rates $r_i$, $i\geq1$, are \emph{FX-adjusted}, which means that they are denoted in the domestic currency such that no FX model needs to be additionally considered and the valuation remains consistently under the measure $\Q_0$.
\par
Execution of the optimal collateral choice strategy imposes that the collateral is posted in the cheapest-to-deliver currency at all times, which results in discounting with the maximal collateral rate. Then, the value of the asset $U$ {with the collateral choice option} is given by
\begin{equation}\label{eq:ratecoll}
U(t) = \E^{\Q_0}_{t}\Bigl[\e^{-\int_{t}^T \max(r_0(s), \dots, r_N(s))\d s} U(T)\Bigr],\quad t_0 \leq t \leq T.
\end{equation}
\par
In the following {subsection} \ref{sec:df}, we examine the valuation formula \eqref{eq:ratecoll} closely and give conditions under which this expression can be simplified so that the impact of the collateral choice option is separated from the asset valuation. 
{\Cref{sec:stochmodel} introduces} a stochastic model associated {with} the collateral rates and {summarizes} the collateral choice valuation method proposed in \cite{WolfCF}, which will be used throughout this article.
In \Cref{sec:detctd}, we introduce the deterministic model for the collateral choice option and show how it relates to the stochastic model{, focussing} on the valuation differences. 
{In} \Cref{sec:CTD-sensi}, we compare sensitivities of the stochastic and deterministic models and demonstrate that the stochastic model {can} incorporate dependencies on multiple currencies simultaneously, a feat that the deterministic model does not reproduce.
In \Cref{sec:CTDhedging}, we consider hedging under the collateral choice option when hedging instruments are not {equipped} with this option themselves. We first show the collateral choice option's influence on collateral management through its impact on asset valuation with a synthetic replication of the collateral choice discount factor. Then, we consider the hedging of a collateralized asset with portfolios of domestic and foreign zero-coupon bonds. Strategies based on the deterministic and stochastic {models} are proposed. In \Cref{sec:CTDhedgingexperiment}, we demonstrate the hedging portfolios in a numerical experiment and {we conclude the article in \Cref{sec:conclu}}.
%
%%%%%%%%%%%%%%%%%%%%%%%%%%%%%%%%%%%%%%%%%%%%%%%%%%%%%%%%%%%%%%%%%%%%%%%%%%%%%%%%
%%%%%%%%%%%%%%%%%%%%%%%%%%%%%%%%%%%%%%%%%%%%%%%%%%%%%%%%%%%%%%%%%%%%%%%%%%%%%%%%
\subsection{The cheapest-to-deliver discount factor}\label{sec:df}
%%%%%%%%%%%%%%%%%%%%%%%%%%%%%%%%%%%%%%%%%%%%%%%%%%%%%%%%%%%%%%%%%%%%%%%%%%%%%%%%
Formula \eqref{eq:ratecoll} for valuation of an asset with the collateral choice option is asset-specific, which renders it difficult to estimate the value of an entire book of assets {equipped} with the collateral choice option as the option would need to be computed individually for every single asset at an immense computational cost. In the following, we give assumptions under which the value of the collateral choice option can be detached from the asset and instead be expressed as a discount factor. 
Obtaining the option value in this way is highly relevant to practitioners because the separation of collateral choice option and asset price facilitates the computation of single discount factors, which can be applied to large numbers of assets without incurring the high computational cost of an asset-specific valuation.
\par
The collateral choice is often expressed in terms of the \emph{collateral basis}. In this formulation, the option is given in terms of \emph{collateral spreads} $q_i$, $i\in\{1, \dots, N\}$ which are the differences between the FX-adjusted collateral rates and the domestic rate, 
\begin{equation}\label{eq:spreaddef}
q_i(t) = r_i(t) - r_0(t).
\end{equation}
Then, the asset value \eqref{eq:ratecoll} is defined equivalently by
\begin{equation}\label{eq:spreadcollU}
U(t) = \E^{\Q_0}_{t}\Bigl[\e^{-\int_{t}^T r_0(s) \d s} \e^{-\int_{t}^T \max(0, q_1(s) \dots, q_N(s))\d s} U(T)\Bigr],
\end{equation}
since the involved maximum function can be written as
\begin{align}
\max(r_0(s), r_1(s), \dots, r_N(s)) %&= r_0(s) + \max((r_0-r_0)(s), (r_1 - r_0)(s), \dots, (r_N - r_0)(s)) \nonumber \\
&= r_0(s) + \max(0, q_1(s), \dots, q_N(s)).
\end{align}
In this general form \eqref{eq:spreadcollU}, the valuation of the asset price $U(t)$ with the collateral choice option depends on multiple stochastic processes and thus their joint distributions. 
%We express these through the dependency structure of the domestic rate $r_0$, the collateral spreads $q_1, \dots, q_N$ and the asset $U$ itself. 
In the following, we show how specific assumptions about the dependency structure between the domestic rate $r_0$, the collateral spreads $q_1, \dots, q_N$ and the asset $U$ itself can simplify the expression in \eqref{eq:spreadcollU}.
\par
A deterministic model of the collateral spreads $q_i$ yields a straightforward evaluation. For clarity, we denote the spreads by $\hat q_i$, $i\in\{1,\dots,N\}$ when they are modelled deterministically. Then, the expression in \eqref{eq:spreadcollU} becomes
\begin{equation}\label{eq:detcoll}
U(t) \approx \e^{-\int_{t}^T \max(0, \hat q_1(s) \dots, \hat q_N(s))\d s}\, \E^{\Q_0}_{t}\Bigl[\e^{-\int_{t}^T r_0(s) \d s} U(T)\Bigr],
\end{equation}
which is the ``standard'' risk-free valuation of $U$ with a deterministic factor to account for the collateral choice option. This assumption is valued by many practitioners for its simplicity. However, this comes at the cost of multiple drawbacks which we will highlight within this article. Besides the inherently strong assumption of no stochasticity in the collateral spreads, the deterministic assumption systematically underestimates the impact of the collateral choice option as we will show in Section~\ref{sec:detctd}. 
\par
If stochastic dynamics are to be assumed for the collateral spreads, a strong assumption is \emph{full independence} between the asset, the collateral spreads and the domestic rate. Then, \eqref{eq:spreadcollU} can be factored into three parts:
\begin{equation}\label{eq:fullindepcoll}
U(t) \approx \E^{Q_0}_{t}\Bigl[ \e^{-\int_{t}^T \max(0, q_1(s) \dots, q_N(s))\d s}\Bigr] \E^{Q_0}_{t}\Bigl[\e^{-\int_{t}^T r_0(s) \d s}\Bigr] \E^{Q_0}_{t}\Bigl[U(T)\Bigr].
\end{equation}
This approach offers a {complete} separation between the involved quantities, but independence between the domestic rate $r_0$ and the asset $U$ goes beyond standard valuation assumptions. Particularly for interest rate products, which are often among the assets equipped with the collateral choice option, this independence {does not hold}.
\par
If we allow for dependence between domestic rate $r_0$ and asset $U$ but retain independence between collateral spreads $q_i$ and the asset $U$, as well as between collateral spreads and domestic rate, we obtain the following model in which the term {that refers} to the collateral choice option can be separated:
\begin{equation}\label{eq:Qfactorization}
U(t) \approx \E^{Q_0}_{t}\Bigl[ \e^{-\int_{t}^T \max(0, q_1(s) \dots, q_N(s))\d s}\Bigr] \E^{Q_0}_{t}\Bigl[\e^{-\int_{t}^T r_0(s) \d s} U(T)\Bigr].  
\end{equation}
This form is particularly appealing since it consists of the standard valuation formula for the asset and one factor exclusively dedicated to the collateral choice option. 
% IT IS POSSIBLE! IT IS JUST THE DIFFERENCE ABOVE, PERFECTLY FINE TO BE INDEPENDENT
% To avoid conflict with the definition given in \eqref{eq:spreaddef}, this assumption of independence between collateral spreads and domestic collateral rate requires an updated interpretation of the FX-adjusted collateral rate such that it is produced by the sum of domestic rate and collateral spread, $r_i = r_0 + q_i$.
\par
However, independence between the domestic rate $r_0$ and the collateral spreads $q_i$ is not required to obtain a valuation formula in the desired shape with a discount factor term. Indeed, we can slightly relax assumptions to only assume independence between the collateral spreads $q_i$, $i\geq1$ and the asset $U$ under the $T$-forward measure associated with $\Q_0$. Under this forward measure, which we denote by $\Q_0^T$, the domestic zero coupon bond $P(t, T) = \E^{\Q_0}[\exp(-\int_t^T r_0(s)\d s)]$ is the num\'eraire and by applying a change of measure to \eqref{eq:spreadcollU} we obtain
\begin{align}
U(t) %&=  \E^{T}_{t}\Biggl[	\frac{P(t,T)}{\e^{-\int_{t}^T r_0(s)\d s}}	\e^{-\int_{t}^T r_0(s) \d s} \e^{-\int_{t}^T \max(0, q_1(s) \dots, q_N(s))\d s} U(T)\Biggr] \nonumber\\
&= P(t,T) \E^{\Q_0^T}_{t}\Bigl[ \e^{-\int_{t}^T \max(0, q_1(s) \dots, q_N(s))\d s} U(T)\Bigr].
\end{align}
Then, by the above independence assumption and returning to the original $\Q_0$ measure, the expression factors into
\begin{align}
U(t) &\approx \E^{\Q_0^T}_{t}\Bigl[ \e^{-\int_{t}^T \max(0, q_1(s) \dots, q_N(s))\d s} \Bigr] P(t,T) \E^{T}_{t}\Bigl[U(T)\Bigr] \nonumber\\
&= \E^{\Q_0^T}_{t}\Bigl[ \e^{-\int_{t}^T \max(0, q_1(s) \dots, q_N(s))\d s}\Bigr] \E^{\Q_0}_{t}\Bigl[\e^{-\int_{t}^T r_0(s)\d s} U(T)\Bigr]. \label{eq:Tfactorization}
\end{align}
This approach has the weakest assumptions of the approaches given. However, its computation can be cumbersome in the presence of multiple cash flows at times $T_k$, $k=1,\dots, m$. In this case, for every cash flow, the collateral spreads need to be modelled under the appropriate forward measures $\Q_0^{T_k}$. %each cash flow needs to be valuated under a different $\Q_0^{T_k}$ forward measure.
\par
We note that none of these models assumes independence between the collateral spreads themselves. That is, for spreads $q_i$ and $q_j$, $i\neq j$, %\in \{1,\dots, N\}$
independence between $q_i$ and $q_j$ is not required. 
\par
In all these simplified approaches, the impact of the collateral choice option can be expressed through an isolated term in the pricing equation,
\begin{equation}
U(t) \approx \CTD(t, T) \E^{\Q_0}_t\Bigl[\e^{-\int_{t}^T r_0(s)\d s} U(T)\Bigr].
\end{equation}
We denote this \emph{cheapest-to-deliver discount factor} (CTD discount factor) by
\begin{equation}\label{eq:CTDfactor}
\mathrm{CTD}(t, T) = \E_{t}\Bigl[ \e^{-\int_{t}^T \max(0, q_1(s) \dots, q_N(s))\d s}\Bigr],
\end{equation}
where the expectation is taken either under the $\Q_0$ measure (in case of independence between collateral spreads and domestic rate) 
or under the $T$-forward measure (in case of correlations between collateral spreads and domestic rate). 
In the deterministic model, this factor is simply 
\begin{equation}\label{eq:ctddet1}
\CTD_\det(t, T) = \exp\Bigl({-\int_{t}^T \max(0, \hat q_1(s) \dots, \hat q_N(s))\d s}\Bigr).
\end{equation}
In the next two sections, we introduce the respective stochastic and deterministic models used in the remainder of this article.
%%%%%%%%%%%%%%%%%%%%%%%%%%%%%%%%%%%%%%%%%%%%%%%%%%%%%%%%%%%%%%%%%%%%%%%%%%%%%%%%
%%%%%%%%%%%%%%%%%%%%%%%%%%%%%%%%%%%%%%%%%%%%%%%%%%%%%%%%%%%%%%%%%%%%%%%%%%%%%%%%
\section{The CTD discount factor with stochastic dynamics}\label{sec:stochmodel}
%%%%%%%%%%%%%%%%%%%%%%%%%%%%%%%%%%%%%%%%%%%%%%%%%%%%%%%%%%%%%%%%%%%%%%%%%%%%%%%%
The CTD discount factor with stochastic dynamics in \eqref{eq:CTDfactor} is notoriously difficult to evaluate and requires approximations. 
Typically, the collateral spreads $q_1, \dots, q_N$ {are modelled with dynamics {analogous to stochastic interest rates}}. Under these dynamics, an analytical solution for the CTD discount factor remains unavailable; hence an approximation {must} be applied. In this article, the \emph{second-order common factor model} introduced in \cite{WolfCF} is used and briefly described below. Other approaches for the approximation step exist in the literature. In \cite{PiterbargCooking}, a {first-order} approximation and the collateral spread formulation are introduced. In \cite{SankovichZhu}, a model based on third-order moment matching for collateral rates is developed. For the {particular} case of exactly two collateral currencies, {both} a second-order model, and a model utilizing conditional independence with respect to the time axis, are given in \cite{PiterbargAntonov}.
%%%
%%%
\subsection{Stochastic collateral spread dynamics}\label{subsec:HW}
We begin by attributing stochastic dynamics to the collateral spreads. Given the inherent uncertainty in modelling their behaviour, a stochastic approach appears natural. Here, we use a one-factor Hull--White (HW) model which is commonly used for interest-rate-related processes. From initial time $t_0$ to maturity $T$, the prescribed dynamics for each collateral spread $q_i$, $i=1,\dots,N$, are
\begin{equation}\label{eq:HWspread}
\d q_i(t) = \kappa_i \left(\theta_i(t) - q_i(t)\right)\d t + \xi_i \d W_i(t), \quad t \in (t_0, T], \quad q_i(t_0)\in \R, 
\end{equation}
where $\kappa_i>0$ is the speed of mean reversion, $\xi_i>0$ the volatility coefficient and $\theta_i(t) \in \R$, $t\in(t_0, T]$ is the long-term mean by which it can be ensured that the expectations of the spreads fit market observations, $\E[q_i(t)] = \hat q_i(t)$. The driving processes $W_i$ are Brownian motions in the measure under which the CTD discount factor is taken, as specified in Section~\ref{sec:df}. That is, if the spreads are modelled with correlations to the domestic rate $r_0$, then their dynamics are modelled directly under the forward measure $\Q_0^T$. Otherwise, if the spreads are independent of the domestic rate, their dynamics are modelled under the $\Q_0$ measure. 
Correlations between the collateral spreads are defined through the instantaneous correlations $\d[W_i, W_j]_t = \rho_{i,j}\dt$, $i,j\in\{1,\dots,N\}$ between the driving Brownian motions.
{
\begin{rem}[Dynamics of the domestic interest rate]
The collateral spread dynamics are defined under a measure where independence from the domestic interest rate $r_0$ holds. Nevertheless, it is helpful to also give dynamics of the domestic rate under $\Q_0$. 
We model the domestic interest rate $r_0$ with one factor Hull--White dynamics analogous to \eqref{eq:HWspread} by
\begin{equation}\label{eq:r0dynamics}
\d r_0(t) = \kappa_0 \bigl(\theta_0(t) - r_0(t)\bigr)\d t + \xi_0 \d W_0(t), \quad t \in (t_0, T], \quad r_0(t_0)\in \R,
\end{equation}
with parameters $\kappa_0, \xi_0 > 0$, long-term mean function $\theta_0(t)\in \R$ for $t\in(t_0, T]$, and instantaneous correlations with the driving Brownian motions of the collateral spreads, $\d[W_0, W_i]_t = \rho_{0, i}\d t$ for $1\leq i\leq N$. 
\end{rem}
}
%%%
%%%
\subsection{Second-order diffusion-based common factor approximation}
In the following, we summarize the second-order diffusion-based common factor approximation introduced in \cite{WolfCF}; this model will be used in the numerical experiments of this article.
We denote the maximum process by $M(t) = \max(0, q_1(t), \dots, q_N(t))$ so that the CTD discount factor \eqref{eq:CTDfactor} can be shortly expressed by $\E[\exp(-\int_{t_0}^T M(t)\d t)]$\footnote{For notational convenience, we omit explicitly conditioning on the filtration at time $t_0$ when no other conditioning is indicated.}. No analytical solution {is} known for this term when the spreads are equipped with stochastic dynamics and correlated to another. %Even when the spreads $q_i$ are equipped with the Hull--White dynamics detailed in the previous section. %Section~\ref{subsec:HW}. 
The issue arises because neither the marginal distributions $M(t)$, $t\in(t_0, T]$, nor the process distributions $(M(s), M(t))$, $s,t\in(t_0, T]$ are available in closed form and thus the distribution of the integral $Y(T) := \int_{t_0}^T M(t)\d t$ is intractable.
\par
By considering the second-order Taylor expansion of $\E[\exp(-Y(T))]$ instead, we alleviate the task to only require the first two moments of the distribution of $Y(T)$. We thus approximate
\begin{align}
%\E\Bigl[\exp\Bigl(-\int_{t_0}^T M(t)\d t \Bigr)\Bigr] \approx \exp\Bigl( \E\Bigl[- \int_{t_0}^T M(t)\d t\Bigr]\Bigr) \Biggl( 1 + \frac12 \Var\Bigl[\int_{t_0}^T M(t)\d t
\E\Bigl[\exp\Bigl(-Y(T) \Bigr)\Bigr] &\approx \exp\Bigl( -\E\Bigl[ Y(T)\Bigr]\Bigr) \Biggl( 1 + \frac12 \Var\Bigl[Y(T)\Bigr]\Biggr). %\nonumber \\
%&= \exp\Bigl( -\int_{t_0}^T \E[M(t)]\d t\Bigr) \Biggl( 1 + \frac12 \Var\Bigl[Y(T)\Bigr]\Biggr). \label{eq:taylor}
\end{align}
The task is thus to compute the moments of the integral, $\E[Y(T)]$ and $\Var[Y(T)]$. In the next step, we show how they can be approximated based on the respective moments of the maximum process, $\E[M(t)]$ and $\Var[M(t)]$. Afterwards, we give a common-factor approximation scheme which results in a semi-analytical form for these moments of the maximum process.
\par
The first moment of the integral $E[Y(T)]$ is easily expressed in terms of the first moment of the maximum process $E[M(t)]$, as Fubini's theorem yields $\E[Y(T)] = \int_{t_0}^T \E[M(t)]\d t$.
\par
The variance of the integral term $\Var[Y(T)]$, however, depends on the covariance structure $\E[M(t)M(s)]$ for $s\neq t$, which has a path-dependence that prohibits an analytical expression. By defining a standard It\^o diffusion $X(t)$ on $[t_0, T]$ such that $\Var[X(t)] = \Var[M(t)]$ for all $t\in[t_0, T]$,  we obtain an estimator for the variance of the integral,
\begin{equation}
\Var[Y(T)] = \Var[\int_{t_0}^T M(s)\d s] \approx \Var[\int_{t_0}^T X(s)\d s].
\end{equation}
This estimator, the variance of the integral of an It\^o diffusion, is analytically available.
\begin{lem}\label{lem:intvar}
The variance of the integral of $X$ is given by
\begin{equation}
\Var\Bigl[\int_{t_0}^T X(t) \d t\Bigr] = \int_{t_0}^T \int_{t_0}^s \Var[M(t)] \d t \d s + \int_{t_0}^T (T - s) \Var[M(s)] \d s.
\end{equation}
\end{lem}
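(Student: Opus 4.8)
The plan is to reduce the left-hand side to a double time integral of the covariance function of $X$, and then to use the fact that a standard It\^o diffusion has independent increments to evaluate that covariance in closed form. Writing $X(t) = X(t_0) + \int_{t_0}^t \gamma(u)\,\d W(u)$ for the diffusion coefficient $\gamma$ of $X$ (with $X(t_0)$ deterministic; a deterministic drift, if present, would not affect any of the following), the increments of $X$ are independent, so for $s\le t$ one has $\Cov[X(s),X(t)] = \Var[X(s)] + \Cov\!\bigl[X(s)-X(t_0),\,X(t)-X(s)\bigr] = \Var[X(s)]$. By the defining property of $X$ we have $\Var[X(s)] = \Var[M(s)]$, and therefore $\Cov[X(s),X(t)] = \Var[M(s\wedge t)]$ for all $s,t\in[t_0,T]$.

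Next I would interchange the expectation with the deterministic, finite-length time integrals. Since $t\mapsto X(t)$ lies in $L^2$ uniformly on the compact interval $[t_0,T]$, Fubini's theorem applies and gives
\[
\Var\Bigl[\int_{t_0}^T X(t)\,\d t\Bigr] = \int_{t_0}^T\!\!\int_{t_0}^T \Cov[X(s),X(t)]\,\d s\,\d t = \int_{t_0}^T\!\!\int_{t_0}^T \Var[M(s\wedge t)]\,\d s\,\d t.
\]
Fixing $t$ and splitting the inner integral at $s=t$ (where $s\wedge t = s$ on $\{s\le t\}$ and $s\wedge t = t$ on $\{s\ge t\}$) turns the inner integral into $\int_{t_0}^t \Var[M(s)]\,\d s + (T-t)\Var[M(t)]$. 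Integrating over $t\in[t_0,T]$ and relabelling the integration variables then yields exactly the two terms in the statement.

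All steps are elementary, so there is no genuine obstacle; the only points deserving a word of care are the justification of the Fubini interchange (covered by the uniform $L^2$-bound on $X$ over $[t_0,T]$) and the observation that it is precisely the independent-increments property of the auxiliary diffusion $X$ — a property the original maximum process $M$ does not enjoy — that collapses $\Cov[X(s),X(t)]$ to $\Var[M(s\wedge t)]$ and makes the double integral explicit. As an alternative, one can invoke the stochastic Fubini theorem to write $\int_{t_0}^T X(t)\,\d t = \int_{t_0}^T (T-u)\gamma(u)\,\d W(u)$ up to a deterministic constant, apply the It\^o isometry to obtain $\int_{t_0}^T (T-u)^2\,\d\Var[M(u)]$, and recover the stated form by an integration by parts; I would keep the covariance argument as the main proof, since it requires no smoothness of $u\mapsto\Var[M(u)]$.
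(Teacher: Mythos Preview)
Your argument is correct. The paper does not give an in-text proof of this lemma but defers to \cite{WolfCF}, so a line-by-line comparison is not possible here; that said, your covariance route --- using the independent-increments structure of the auxiliary process to collapse $\Cov[X(s),X(t)]$ to $\Var[M(s\wedge t)]$ and then splitting the double integral at $s=t$ --- is the natural and standard derivation, and your explicit reading of ``standard It\^o diffusion'' as $X(t)=X(t_0)+\int_{t_0}^t\gamma(u)\,\d W(u)$ with deterministic $\gamma$ is precisely the interpretation under which the stated identity holds.
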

We denote this diffusion-based estimator by 
\begin{equation}
\Psi(t_0, T) := \Var\Bigl[\int_{t_0}^T X(t)\d t\Bigr] \label{eq:diffusionPsi}
\end{equation} 
and by replacing $\Var[Y(T)]$ with the diffusion-based estimator, we arrive at the approximation $\CTD(t_0, T) \approx \exp(-\int_{t_0}^T \E[M(t)]\d t)(1+\frac12 \Psi(t_0, T))$. 
The proof of \Cref{lem:intvar} is given in \cite{WolfCF}, where additionally another variance of the integral estimator based on a mean-reverting process is considered.
\par
It remains to find an expression for the moments of the maximum, $\E[M(t)]$ and $\Var[M(t)]$. 
At any fixed time $t$, the HW dynamics \eqref{eq:HWspread} imply that the collateral spread vector follows a multivariate Gaussian distribution,
\begin{equation}\label{eq:multivargausscf}
{q}(t) =  [q_1(t), \dots, q_N(t)]^T \sim \mathcal{N}\Bigl([\mu_1(t), \dots, \mu_N(t)]^T, \Sigma(t)\Bigr),
\end{equation}
where the marginal distributions $q_i(t) \sim \mathcal{N}(\mu_i(t), \sigma_i^2(t))$ and the covariance matrix $\Sigma(t)$ %= \Bigl(\sigma_i(t)\sigma_j(t)\corr(q_i(t), q_j(t))\Bigr)_{i, j = 1,\dots, N}$ 
are explicitly known.
We define a common factor approximation of these random vectors for times within a time discretization $\mathcal{T}$ of $[t_0, T]$. For this approximation, the required moments of the corresponding maximum process can be directly computed.
\begin{dfn}[Common factor approximation]
Let $t\in[t_0, T]$ be fixed and let $q_i(t) \sim \mathcal{N}(\mu_i(t), \sigma_i^2(t))$, $i\in\{1,\dots,N\}$ be normally distributed random variables. We define the common factor approximation %$(\widetilde q_1(t), \dots, \widetilde q_N(t)$ 
component-wise by
\begin{equation}
\widetilde q_i(t) := C(t) + A_i(t), \quad i\in \{1,\dots,N\},
\end{equation}
where $C(t),\ A_1(t),\dots,A_N(t)$ are independent normal random variables with distributions
\begin{align}
C(t) \sim \mathcal{N}(0, \sigma^2_{\min}(t) |\gamma(t)|), \\
A_i(t) \sim \mathcal{N}(\mu_i(t), \sigma_i^2(t) - \Var[C(t)]).
\end{align}
Here, we denote the minimal variance occurring amongst the spreads $q_i(t)$ by  $\sigma_{\min}^2(t) := \min(\sigma_1^2(t), \dots, \sigma_N^2(t))$ and  determine the correlation structure of the common factor approximation by the parameter $\gamma(t)\in[0,1)$.
\end{dfn}
It can be shown that the marginal distributions coincide, $\widetilde q_i(t) \sim q_i(t)$, and the correlation structure of the common factor approximation is imposed through the parameter $\gamma(t)$ by
\begin{equation}
\corr[\widetilde q_i(t), \widetilde q_j(t)] = \frac{\sigma_{\min}^2(t) |\gamma(t)|}{\sigma_i(t)\sigma_j(t)}, \quad i \neq j \in \{1, \dots, N\}.
\end{equation}
In the three-currency case, the parameter $\gamma(t)$ is analytically available (see \cite{WolfCF}), for a higher number of currencies, it becomes an optimization problem to choose a parameter $\gamma(t)$ such that the covariance matrix of the common factor approximation is as close as possible to the covariance matrix $\Sigma(t)$ of the spreads.
\par
The common factor approximation gives rise to the common factor maximum, given by $\widetilde M(t) := \max(0, \widetilde q_1(t), \dots, \widetilde q_N(t))$. It can be equivalently expressed as
\begin{equation}
\widetilde M(t) = \max\bigl(0, C(t) + \max\bigl(A_1(t), \dots, A_N(t)\bigr)\bigr),
\end{equation}
where the inner maximum is taken between independent random variables. This construction as a sum and maximum of independent random variables makes the cumulative distribution function (cdf) of $\widetilde M(t)$ analytically available.
\begin{lem}
The cumulative distribution function of $\widetilde M(t)$ is given by
\begin{equation}
F_{\widetilde M(t)}(x) := \P[\widetilde M(t) \leq x] = 
\begin{cases}
0, & x \leq 0, \\
\Bigl(f_{C(t)} \ast \prod\limits_{i=1}^N F_{A_i(t)}\Bigr)(x), & x > 0,
\end{cases}
\end{equation}
with $f_{C(t)}(x) = \frac{1}{\sqrt{|\gamma(t)|\sigma_{\min}^2(t)}}\phi( \frac{x}{\sqrt{|\gamma(t)|\sigma_{\min}^2(t)}})$ the density of $C(t)$ and $F_{A_i(t)}(x) = \Phi(a)$ the cdf of $A_i(t)$, where 
\begin{equation}
a = \frac{x - \mu_i(t)}{\sqrt{\sigma_i^2(t) - \sigma_{\min}^2(t)|\gamma(t)|}}.
\end{equation} Here, $\varphi$ and $\Phi$ are the density and cdf of the standard normal $\mathcal{N}(0,1)$ distribution and $f*F$ denotes the convolution $(f*F)(x) = \int_\R f(z)F(x-z)\d z$. 
\end{lem}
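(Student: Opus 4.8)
The plan is to use the decomposition $\widetilde M(t) = \max\bigl(0,\, C(t) + \max(A_1(t),\dots,A_N(t))\bigr)$ established just above, together with the joint independence of $C(t), A_1(t),\dots,A_N(t)$. Set $S(t) := C(t) + \max(A_1(t),\dots,A_N(t))$, so that $\widetilde M(t) = \max(0, S(t))$. The identity then follows from splitting on the sign of $S(t)$ and conditioning on $C(t)$.

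For the trivial branch, note that $\widetilde M(t)\geq 0$ by construction, so $F_{\widetilde M(t)}(x)=\P[\widetilde M(t)\leq x]=0$ for every $x<0$. The piecewise formula is to be read with the understanding that the cdf jumps at $x=0$: the atom $\P[\widetilde M(t)=0]=\P[S(t)\leq 0]$ is recovered as the left limit of the $x>0$ branch rather than from the value reported at $x\le 0$.

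For the nontrivial branch, fix $x>0$. On $\{S(t)\leq x\}$ we have $\max(0,S(t))\leq x$, whereas on $\{S(t)>x\}$ we have $\max(0,S(t))=S(t)>x$; hence $\{\widetilde M(t)\leq x\}=\{S(t)\leq x\}$ and $F_{\widetilde M(t)}(x)=\P[C(t)+\max(A_1(t),\dots,A_N(t))\leq x]$. I would then condition on $C(t)$: by independence of $C(t)$ from $(A_1(t),\dots,A_N(t))$ and by mutual independence of the $A_i(t)$, the tower property gives $\P[\max_i A_i(t)\le x - z \mid C(t)=z] = \prod_{i=1}^N F_{A_i(t)}(x-z)$, and integrating against the law of $C(t)$ --- which, assuming $\gamma(t)>0$, is absolutely continuous with density $f_{C(t)}$ since $\Var[C(t)]=\sigma_{\min}^2(t)|\gamma(t)|>0$ --- yields
\[
F_{\widetilde M(t)}(x) = \int_\R f_{C(t)}(z)\prod_{i=1}^N F_{A_i(t)}(x-z)\,\d z = \Bigl(f_{C(t)}\ast\prod_{i=1}^N F_{A_i(t)}\Bigr)(x).
\]
It then remains to insert the explicit laws from the definition of the common factor approximation: $C(t)\sim\mathcal N\bigl(0, \sigma_{\min}^2(t)|\gamma(t)|\bigr)$ gives the stated density $f_{C(t)}$, and $A_i(t)\sim\mathcal N\bigl(\mu_i(t),\, \sigma_i^2(t)-\Var[C(t)]\bigr)$ with $\Var[C(t)]=\sigma_{\min}^2(t)|\gamma(t)|$ gives $F_{A_i(t)}(x)=\Phi(a)$ with $a = (x-\mu_i(t))/\sqrt{\sigma_i^2(t)-\sigma_{\min}^2(t)|\gamma(t)|}$ as claimed; the radicand is positive because $\sigma_{\min}^2(t)|\gamma(t)| < \sigma_{\min}^2(t)\le \sigma_i^2(t)$, using $\gamma(t)\in[0,1)$.

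I do not expect a genuine obstacle here --- the argument is essentially bookkeeping around an independence decomposition. The only points that need care are the behaviour at $x=0$ noted above; the justification of the conditioning step by the tower property (equivalently Fubini), which relies precisely on the independence built into the definition of the common factor approximation; and the degenerate case $\gamma(t)=0$, in which $C(t)\equiv 0$ has no density and the convolution collapses to the pure product $\prod_{i=1}^N F_{A_i(t)}(x)$, consistent with the formula read as a weak limit.
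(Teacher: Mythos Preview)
The paper does not actually prove this lemma in-text; it simply states that ``the proof of this lemma can be found in \cite{WolfCF}.'' Your argument is correct and is precisely the natural one: use the decomposition $\widetilde M(t)=\max(0,C(t)+\max_i A_i(t))$, reduce the $x>0$ branch to $\P[S(t)\le x]$, and compute the latter by conditioning on $C(t)$ and exploiting the mutual independence of $C(t),A_1(t),\dots,A_N(t)$ to factor the inner maximum into a product of marginal cdfs. Your remarks on the atom at $x=0$ and on the degenerate case $\gamma(t)=0$ are accurate and go slightly beyond what the lemma statement requires.
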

The proof of this lemma can be found in \cite{WolfCF}.
With the cdf of the common factor maximum $\widetilde M(t)$ at hand, arbitrary moments of the distribution can be computed and it holds that
\begin{align}
\E[\widetilde M(t)] = \int_{0}^\infty 1 - F_{\widetilde M(t)}(x)\d x ,\quad
\E[\widetilde M(t)^2] = \int_0^\infty 2x \bigl(1 -  F_{\widetilde M(t)}(x)\bigr) \d x ,
\end{align}
and hence $\Var[\widetilde M(t)]$ can be obtained. Using these moments of $\widetilde M(t)$ as estimators for $\E[M(t)]$ and $\Var[M(t)]$, we define the second-order diffusion-based common factor approximation of the cheapest-to-deliver discount factor as
\begin{equation}\label{eq:CFCTD}
\CTD_\CF(t, T) := \exp\left(-\int_{t}^T \E_{t}\left[\widetilde M(s) \right]\d s\right) \left( 1 + \frac12 \Psi(t, T)\right).
\end{equation}
where $\Psi(t,T)$  is the diffusion-based variance estimator defined in \eqref{eq:diffusionPsi}, with the diffusion $X$ calibrated to the common factor maximum variance $\Var[\widetilde M(t)]$.
%$\Psi(t,T)$ is defined to be
%\begin{equation}
%\Psi(t, T) = \int\limits_t^T\int\limits_t^u \Var_t[\widetilde M(v)] \d v \d u + \int\limits_t^T (T-u) \Var_t[\widetilde M(u)]\d u.
%\end{equation}
Proofs for all preceding lemmata, a detailed derivation and accuracy results for this stochastic model can be found in \cite{WolfCF}.
%%%%%%%%%%%%%%%%%%%%%%%%%%%%%%%%%%%%%%%%%%%%%%%%%%%%%%%%%%%%%%%%%%%%%%%%%%%%%%%%
%%%%%%%%%%%%%%%%%%%%%%%%%%%%%%%%%%%%%%%%%%%%%%%%%%%%%%%%%%%%%%%%%%%%%%%%%%%%%%%%
\section{The deterministic model and its connection to the stochastic model}\label{sec:detctd}
%%%%%%%%%%%%%%%%%%%%%%%%%%%%%%%%%%%%%%%%%%%%%%%%%%%%%%%%%%%%%%%%%%%%%%%%%%%%%%%%
%
The deterministic model is based on market data $q_i^{\mathrm{mkt}}(t_k)$, $i\in\{1,\dots,N\}$ which is available at discrete times $t_k \in \mathcal{T}$. By suitable interpolation between these points, we obtain a continuous time function $\hat q_i(t)$, $t\in[t_0, T]$ which corresponds to each individual, deterministic collateral spread. This yields the deterministic CTD discount factor given in \eqref{eq:ctddet1}:
\begin{equation}
\CTD_\det(t, T) = \exp\Bigl({-\int_{t}^T \max(0, \hat q_1(s) \dots, \hat q_N(s))\d s}\Bigr).
\end{equation}
This describes {a unique} point of departure for the stochastic model given in \Cref{subsec:HW}. Unlike classic interest rate models, which meet zero-coupon bond prices obtained from the market, the point of departure {is} deterministic collateral spreads. A typical requirement is {a} consistency in the sense that the stochastic collateral spreads should collapse back to the deterministic collateral spreads as volatility tends to zero,
\begin{equation}\label{eq:stochtodetspread}
q_i(t) \overset{\xi_i \to 0}\longrightarrow \hat q_i(t),
\end{equation}
for all $t\in[t_0,T]$. This assumption is equivalent to setting $\E[q_i(t)] = \hat q_i(t)$ and ensures that both models perfectly fit available market data at the monitoring dates,
\begin{equation}
\E[q_i(t_k)] = \hat q_i (t_k) = q_i^{\mathrm{mkt}}(t_k).
\end{equation}
As always, the expectation is taken under the appropriate measure, $\Q_0$ or $\Q_0^T$, according to the model assumptions. 
\begin{rem}[Hull--White model calibration]\label{rem:HW-decomp}
Because of the desired connection to the deterministic curves $\hat q_i$, $i\in\{1,\dots,N\}$, it is advantageous to define the Hull--White processes $q_i$ in terms of the Hull--White decomposition, % (see e.g.\ \cite{GrzelakOosterlee}),
\begin{equation}
q_i(t) = \hat q_i(t) + u_i(t), \quad t\in[t_0, T],
\end{equation}
where $u_i(t)$ is a centred Ornstein-Uhlenbeck process with dynamics
\begin{equation}
\d u_i(t) = -\kappa_i u_i(t) \d t + \xi_i W_i(t), \quad u_i(t_0) = 0.
\end{equation}
%This can be linked back to the Hull--White process given in \eqref{eq:HWspread} by a special choice of the long-term mean function $\theta_i(t)$, $t\in[t_0, T]$ given in the next lemma.
\end{rem}
For completeness, we now give the connection between $\hat q_i(t)$ and the long-term mean function $\theta_i(t)$ from the process formulation in \eqref{eq:HWspread}. Analogous computations have been undertaken for interest rate processes; see, for example, \cite{brigo2006interest}.
In the following lemma, we give two solutions for $\theta_i(t)$ which satisfy the convergence requirement \eqref{eq:stochtodetspread}. First, a general solution in continuous time comes at the cost of requiring a differentiable interpolation of the deterministic collateral spreads. Secondly, a solution is given in which the requirement only holds at the discrete dates $t_k\in\mathcal{T}$. This is particularly suitable for numerical implementations, and for market data {that} may exhibit discontinuities.
\begin{lem}[Long-term mean function of the collateral spreads]\label{lem:thetalemma}
Let $q_i$ be the collateral spread with dynamics given in \eqref{eq:HWspread} and let its initial value be $q_i(t_0) = \hat q_i(t_0)$.
\begin{enumerate}
\item Let $\hat q_i(t)$ be a continuously differentiable function on $[t_0, T]$. Then, the choice of the long-term mean 
\begin{equation}\label{eq:newexacttheta}
\theta_i(t) = \hat q_i(t) + \frac\partial{\partial t} \frac1{\kappa_i} \hat q_i(t)
\end{equation}
satisfies $\E[q_i(t)] = \hat q_i(t)$ for every $t \in [t_0, T]$.

\item Let $\hat q_i(t_k)\in\mathbb{R}$ be defined at every time $t_k\in\mathcal{T}$ of a discretization of the time domain $[t_0, T]$.
Then, the choice of the long-term mean
\begin{equation}\label{eq:newtheta}
\theta_i(t)  = \frac{\hat q_i(t_k) \e^{\kappa_i t_k} - \hat q_i(t_{k-1}) e^{\kappa_i t_{k-1}}}{\e^{\kappa t_k} - \e^{\kappa t_{k-1}}},
\end{equation}
for $t \in (t_{k-1}, t_k]$, $k\geq 1$, 
yields that $\E[q_i(t_k)] = \hat q_i(t_k)$ holds for every $t_k \in \mathcal{T}$.
\end{enumerate}
\end{lem}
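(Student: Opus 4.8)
The plan is to reduce both parts to the elementary linear ODE satisfied by the mean function $m_i(t):=\E[q_i(t)]$ and to solve it explicitly on the relevant intervals. Taking expectations in \eqref{eq:HWspread} and using that the stochastic integral $\int_{t_0}^{\cdot}\xi_i\,\d W_i$ has vanishing expectation (its integrand is deterministic, so it is a true martingale), one gets $m_i'(t)=\kappa_i(\theta_i(t)-m_i(t))$ with $m_i(t_0)=\hat q_i(t_0)$, whose unique solution is
\begin{equation*}
m_i(t)=\hat q_i(t_0)\,\e^{-\kappa_i(t-t_0)}+\kappa_i\int_{t_0}^{t}\e^{-\kappa_i(t-s)}\theta_i(s)\,\d s .
\end{equation*}
Both assertions then follow by plugging in the proposed $\theta_i$. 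As a consistency check one can also invoke the Hull--White decomposition of \Cref{rem:HW-decomp}: writing $q_i=\hat q_i+u_i$ with $u_i$ a centred Ornstein--Uhlenbeck process gives $\E[q_i(t)]=\hat q_i(t)$ directly, and the two formulas for $\theta_i$ are precisely those making \eqref{eq:HWspread} equivalent to that decomposition.

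For part (i), the continuous differentiability of $\hat q_i$ makes the right-hand side of the mean ODE continuous in $t$ and globally Lipschitz in $m_i$, so the initial value problem has a unique solution; it therefore suffices to verify that $t\mapsto\hat q_i(t)$ solves it. Substituting $\theta_i$ from \eqref{eq:newexacttheta}, the ODE reads $\hat q_i'(t)=\kappa_i\bigl(\hat q_i(t)+\tfrac1{\kappa_i}\hat q_i'(t)-\hat q_i(t)\bigr)$, an identity, and the initial condition holds by the assumption on $q_i(t_0)$. Hence $m_i\equiv\hat q_i$ on all of $[t_0,T]$, which is stronger than equality at the monitoring dates. Equivalently, one may substitute \eqref{eq:newexacttheta} into the explicit solution above and integrate by parts.

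For part (ii), I would proceed by induction over the grid $\mathcal T=\{t_0<t_1<\cdots\}$, using the explicit solution one subinterval at a time. On $(t_{k-1},t_k]$ the proposed $\theta_i$ is the constant $\theta_i^{(k)}$ given by \eqref{eq:newtheta} (read with $\kappa=\kappa_i$), so the mean solves
\begin{equation*}
m_i(t)=m_i(t_{k-1})\,\e^{-\kappa_i(t-t_{k-1})}+\theta_i^{(k)}\bigl(1-\e^{-\kappa_i(t-t_{k-1})}\bigr),\qquad t\in(t_{k-1},t_k].
\end{equation*}
Under the inductive hypothesis $m_i(t_{k-1})=\hat q_i(t_{k-1})$ (the base case $k=1$ being the initial condition), evaluating at $t=t_k$ and inserting $\theta_i^{(k)}$ collapses the right-hand side to $\hat q_i(t_k)$; the only computation involved is the rearrangement obtained by multiplying numerator and denominator of $\theta_i^{(k)}$ by $\e^{-\kappa_i t_k}$, which turns \eqref{eq:newtheta} into $\bigl(\hat q_i(t_k)-\hat q_i(t_{k-1})\e^{-\kappa_i(t_k-t_{k-1})}\bigr)\big/\bigl(1-\e^{-\kappa_i(t_k-t_{k-1})}\bigr)$. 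This closes the induction. I would also remark that $m_i$ need not coincide with $\hat q_i$ at intermediate times, since $\theta_i$ is only piecewise constant there, which is consistent with the statement that the fit holds only on $\mathcal T$.

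There is no genuine obstacle here: once the mean ODE is written down, the argument is routine. The only points deserving a word of care are the martingale justification for pulling the expectation through the stochastic integral, and, in part (i), stating explicitly that the $C^1$ hypothesis on $\hat q_i$ is exactly what makes $\theta_i$ in \eqref{eq:newexacttheta} well defined and the initial value problem uniquely solvable.
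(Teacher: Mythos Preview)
Your proposal is correct and essentially matches the paper's proof. The paper starts from the same explicit solution for $\E[q_i(t)]$; for part~(i) it substitutes \eqref{eq:newexacttheta} and applies the integration-by-parts you mention as an alternative, and for part~(ii) it splits the integral over the subintervals and obtains a telescoping sum, which is just your induction unrolled. Your ODE-verification argument for part~(i) is a slightly slicker variant of the same computation.
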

A proof of this result is given in \ref{appx:thetalemma}.
\subsection{Collateral disputes under the collateral choice option}
Unfortunately, such consistency on the level of individual spreads does not result in consistent pricing of the collateral choice option. Instead, we show that the resulting prices systematically differ.
Under the deterministic model, the cheapest-to-deliver discount factor is the purely analytical expression given in \eqref{eq:ctddet1}. By the condition \eqref{eq:stochtodetspread}, it holds
\begin{align}
\CTD_\det({t_0}, T) &= \exp\Bigl({-\int_{t_0}^T \max(0, \hat q_1(t) \dots, \hat q_N(t))\d t}\Bigr) \nonumber \\
& = \exp\Biggl({-\int_{t_0}^T \max\left(0, \E[q_1(t)], \dots, \E[q_N(t)]\right)\d t}\Biggr). \label{eq:detCTD}
\end{align}
A comparison to the generalized CTD discount factor in \eqref{eq:CTDfactor} shows that contrasted with the stochastic formulation, the deterministic version is obtained through the exchange of the expectation operator with the exponential, integral and maximum operators. This corresponds to two applications of Jensen's inequality and one application of Fubini's theorem, resulting in the inequalities
\begin{align}
\CTD({t_0}, T) & = \E\Bigl[\exp\Bigl(-\int_{t_0}^T \max\left(0, q_1(t), \dots, q_N(t)\right)\d t\Bigr)\Bigr] \nonumber \\
&\leq \exp\Bigl(-\E\Bigl[\int_{t_0}^T \max(0, q_1(t), \dots, q_N(t))\d t\Bigr]\Bigr) \nonumber \\
&\leq \exp\Bigl(-\int_{t_0}^T \max(0, \E[q_1(t)], \dots, \E[q_N(t)])\d t\Bigr) = \CTD_\det({t_0},T). \label{eq:JensenDet}
\end{align}
{The} discount factor of the deterministic model acts as an upper bound for the discount factor associated {with} the stochastic model. 
Equality is only attained when the collateral spreads are deterministic, which does not align with observations by practitioners.
Consequently, the discount factor obtained from a stochastic model is smaller than the one obtained from a deterministic model, which corresponds to the attribution of {a more significant} collateral choice option under stochastic assumptions.
\par
In the following, we will exemplify this on the concrete example of an interest rate swap.
%In the following sections, we will analyse the implications of using a deterministic or a stochastic approach for the evaluation of the collateral choice option. To this end, collateral management and risk management in the form of hedges for assets with the collateral choice option are considered.
%
%%%%%%%%%%%%%%%%%%%%%%%%%%%%%%%%%%%%%%%%%%%%%%%%%%%%%%%%%%%%%%%%%%%%%%%%%%%%%%%%
%%%%%%%%%%%%%%%%%%%%%%%%%%%%%%%%%%%%%%%%%%%%%%%%%%%%%%%%%%%%%%%%%%%%%%%%%%%%%%%%
The inherent valuation differences between the stochastic and the deterministic collateral choice model have an immediate impact on collateral management. We illustrate this with the example of a (payer) interest rate swap $V$ with payment dates $T_k$, $k\in \{1, \dots, m\}$, at which a fixed rate $K$ is exchanged for a simple compounded Ibor\footnote{The interest rate swaps here do not depend on the choice of floating rate. A (simple compounded forward) Ibor rate is used for illustrative purposes, but in light of the Libor transition, any other floating rate can easily replace it.} rate accrued on a notional $\bar N$. 
%The valuation at time $t$ is well known to be based on forward rates,

Again, $P(t, T_k) = \E^{\Q_0}[\exp(-\int_t^{T_k} r_0(s)\d s)]$ denotes the domestic zero coupon bond. 
The price of the swap $V$ at time $t$ is well-known to be 
\begin{align}\label{eq:noctdswap}
V(t) = \bar N \sum_{k=1}^m \tau_k P(t, T_k) \bigl(\ell_k(t) - K\bigr), % \sum_{k=1}^m U_k(t) =
\end{align}
where $\tau_k := T_k - T_{k-1}$ is the time period between payment dates and $\ell_k(t)$ the forward Ibor rate
\begin{equation}
\ell_k(t) := \frac{ P(t, T_{k-1}) - P(t, T_{k})}{(T_k - T_{k-1}) P(t, T_k)}.
\end{equation}
To add the (perfectly exercised, free substitution) collateral choice option, it is helpful to decompose the swap $V$ into a sum of forward rate agreements $U_k$, each with a single cash flow at time $T_k$. 
The swap equipped with the collateral choice option, denoted by $V^c$ to indicate the option, is obtained from an application of the appropriate CTD discount factor \eqref{eq:CTDfactor} to each asset $U_k$:
\begin{align}\label{eq:CTDswap}
V^c(t) &= \sum_{k=1}^m \CTD(t, T_k) \, U_k^c(t, T_k) = \bar N \sum_{k=1}^m \CTD(t, T_k) \tau_k P(t, T_k) \bigl(\ell_k(t) - K\bigr).
\end{align}
\par
In practice, the CSA agreement indicates a party (either involved in the trade or even a designated third party) {determining} the due collateral based on their asset valuation. Both trading parties involved have the right to dispute this valuation, see, for example, \cite{simmons2018collateral}. As shown above, the CTD discount factors $\CTD(t, T_k)$ have a model-dependent price, and as they are not traded assets, there is no canonical value attached. 
The inequality in \eqref{eq:JensenDet} shows that the deterministic model used by many practitioners does not adjust for a loss of convexity relative to the stochastic model. This implies that two parties with {different} valuation models of the collateral choice option are likely to arrive at different present values of their trade.
%
 %One party considers the deterministic approximation $\CTD(t, T_k) \approx \CTD_\det(t, T_k)$ given by \eqref{eq:detCTD}, and the other party a stochastic approximation $\CTD(t, T_k) \approx \CTD_\CF(t, T_k)$ given by \eqref{eq:CFCTD}. This difference in model selection results in a significant difference between the amounts either party expects in the collateral account. {[to do: can be read as if CF was not precise, maybe evaluate directly deterministic vs MC]}
%\par
%\begin{figure}
%\centering
%\includegraphics[width=.7\textwidth]{graphs/valuation-difference.png}
%\caption[]{{This will be replaced with 2 lines: det vs.\ MC and det vs.\ common factor. }The average difference in valuation of an interest rate swap with collateral choice option modelled stochastically by one party and deterministically by the other party. \textit{Comment for author: 100 Monte Carlo repetition, bogus parameters. Adding future cash flows (longer maturity) will always increase the valuation difference!}}
%\label{fig:stoch-vs-det-swap}
%\end{figure}
%
%
%%%%%%%%%%%%%%%%%%%%%%%%%%%%%%%%%%%%%%%%%%%%%%%%%%%%%%%%%%%%%%%%%%%%%%%%%%%%%%%%
%%%%%%%%%%%%%%%%%%%%%%%%%%%%%%%%%%%%%%%%%%%%%%%%%%%%%%%%%%%%%%%%%%%%%%%%%%%%%%%%
%
\section{Sensitivities of the collateral choice option}\label{sec:CTD-sensi}
\begin{figure}[]
\centering
\includegraphics[width=.49\textwidth, height=0.40833\textwidth]{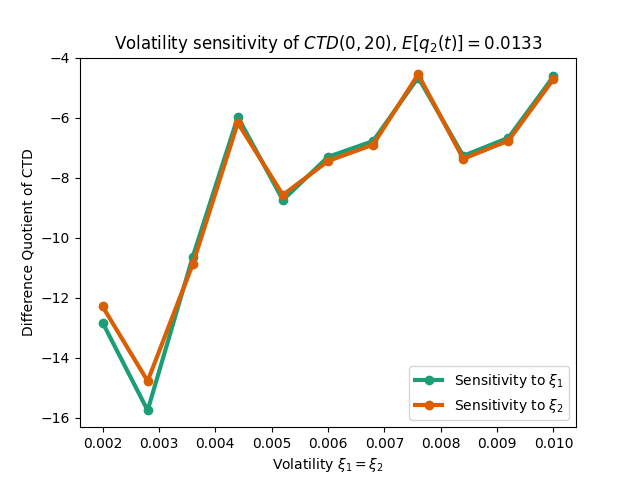}
\includegraphics[width=.49\textwidth, height=0.40833\textwidth]{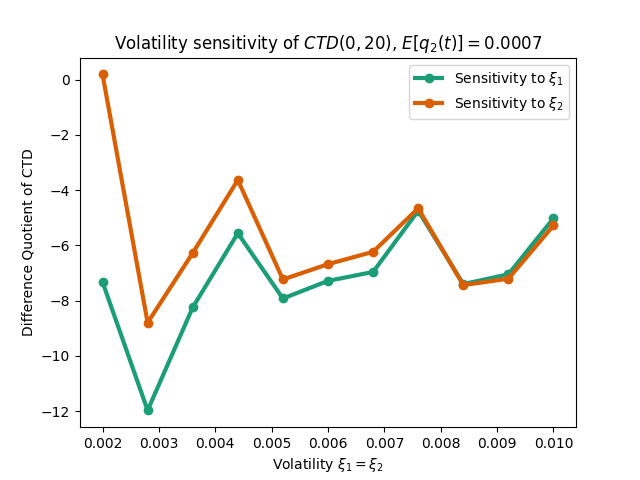}
\caption{Sensitivity of the CTD discount factor to the volatility parameters $\xi_1$ and $\xi_2$ obtained from the difference quotient of the common factor approximation. Both $\xi_1$ and $\xi_2$ are set to the same value indicated on the horizontal axis, difference quotients are obtained one parameter at a time. The speeds of mean reversion are $\kappa_1 = 0.0078$, $\kappa_2 = 0.0076$, the instantaneous correlation is $\rho_{1,2} = 0.5$. 
For all times $t\in[0,20]$, the expectations of the spreads are equal to the initial values. In the graph on the left, they are $\E[q_1(t)] \equiv 0.014$, $\E[q_2(t)] \equiv  0.0133$, whereas in the graph on the right there is a larger difference between the expectations, $\E[q_1(t)] \equiv 0.014$, $\E[q_2(t)]\equiv 0.0007$.}
\label{fig:vola-sensi}
\end{figure}
The stochastic and deterministic models for the collateral choice option also exhibit different behaviours in terms of sensitivities, as we will demonstrate in this section. We first consider the sensitivity to volatility, which the deterministic model clearly does not exhibit, and then compare how both models behave when the long-term mean of the collateral spreads is varied.
\par
%
%%%%%%%%%%%%%%%%
% Vola Sensitivity
%%%%%%%%%%%%%%%
If the collateral spreads are assumed to have stochastic dynamics, it is implied that the collateral choice option (represented by the CTD discount factor) has a sensitivity to the collateral spread volatility. We denote $\CTD(t_0, T; \xi_i)$ the CTD discount factor obtained when the $i$-th collateral spread, for some $i\in\{1,\dots,N\}$, has volatility parameter $\xi_i$. The described sensitivity is then
\begin{equation}
\frac{\partial \CTD(t_0, T)}{\partial \xi_i} = \lim\limits_{\varepsilon \to 0} \frac{\CTD(t_0, T; \xi_i + \varepsilon) - \CTD(t_0, T; \xi_i)}{\varepsilon}.
\end{equation}
Since the stochastic CTD discount factor does not admit an analytical solution, we have to resort to a ``bump-and-revalue'' evaluation where the CTD discount factor is computed with the different parameters considered. In the following, we give an example based on the common factor approach given in \eqref{eq:CFCTD} with a two-sided finite difference scheme,
\begin{equation}
\frac{\partial \CTD(t_0, T)}{\partial \xi_i} \approx \frac{\CTD_\CF(t_0, T; \xi_i + \varepsilon) - \CTD_\CF(t_0, T; \xi_i-\varepsilon)}{2\varepsilon} .
\end{equation}
The example consists of three currencies, i.e.\ two collateral spreads and one domestic currency corresponding to the zero component in $\max(0, q_1(t), q_2(t))$.
 To improve the interpretability of the results, constant expectations are assumed for both of the collateral spreads, and the expectations are ordered such that $\E[q_1(t)] > \E[q_2(t)] > 0$. Furthermore, expectations of the spreads are exaggerated with respect to recent market observations for this example. 
The results are presented in \Cref{fig:vola-sensi}. It becomes apparent that each collateral spread impacts the CTD discount factor, not only the collateral spread with the highest expected value, which would be the uncontested, sole maximal spread under deterministic dynamics.
%
% Add a proof here?
%
The size of {each collateral spread's impact} on the CTD discount factor is related to the spread's probability of being the maximal spread. 
As volatility increases, the spreads exhibit more variation in their paths, increasing the initially smaller spread's probability of being the maximal spread. Consequently, the difference between the sensitivity to either volatility parameter %, 
%\begin{equation}
%\Delta_{\xi_1, \xi_2} \CTD(t_0, T) = \frac{\partial \CTD(t_0, T)}{\partial \xi_2} -  \frac{\partial \CTD(t_0, T)}{\partial \xi_1}
%\end{equation}
decreases {as the volatility parameters grow}.
In the graph on the right, we chose a significantly smaller initial value and expectation for the collateral spread $q_2(t)$ to reduce its probability of being the maximal spread. 
%Particularly so in low-volatility scenarios, where, heuristically speaking, there is not enough random motion in the system for the two spreads to cross. 
Particularly in the low-volatility scenarios, there is, heuristically speaking,  not enough random motion in the system for the two spreads to cross. 
Consequently, the sensitivity of the CTD discount factor to $\xi_2$ is reduced in these cases. We conclude that the CTD discount factor exhibits sensitivities to the collateral spreads' volatilities and if the collateral spreads are to be thought of as stochastic processes, then this is an {integral} part of the picture which the deterministic model, and hedging strategies derived from it, cannot take into account.
\par
%%%%%%%%%%%%%%%
% Delta Sensitivity
%%%%%%%%%%%%%%%
%
\begin{figure}[]
\centering
\includegraphics[width=.49\textwidth, height=0.40833\textwidth]{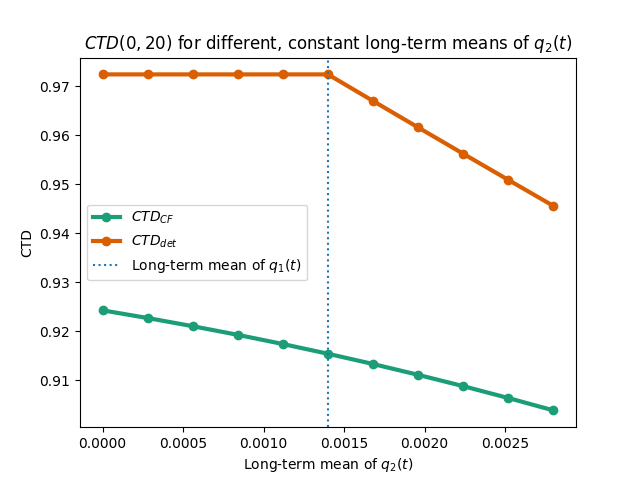}
\includegraphics[width=.49\textwidth, height=0.40833\textwidth]{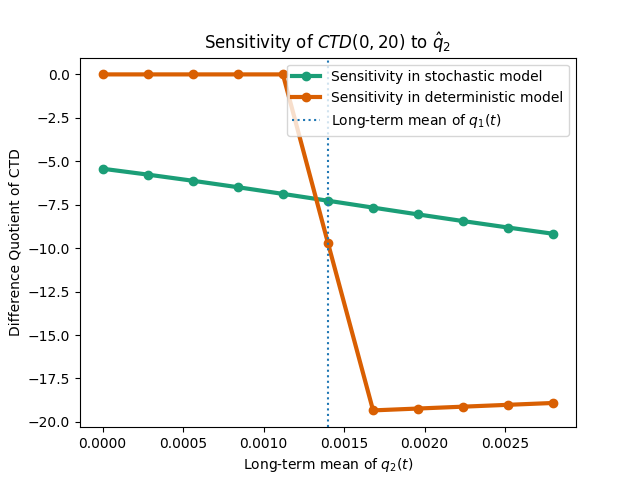}
\caption{Sensitivity of the stochastic and the deterministic CTD discount factor to changes in the long-term mean of $q_2(t)$. Long-term means $\hat q_1$ and $\hat q_2$ are constant at the values indicated in the figure. Volatilities are $\xi_1 = 0.0018$, $\xi_2 = 0.0023$, correlation and speed of mean reversion are as indicated in \Cref{fig:vola-sensi}. In the graph on the left, the CTD discount factors obtained from the deterministic and the common factor method are given for different values of $\hat q_2$. In the graph on the right, the difference quotient relative to changes in the long-term mean $\hat q_2$ is given.}
\label{fig:delta-sensi}
\end{figure}
When the sensitivity to changes in the long-term mean of the underlying collateral spread is considered, more differences between the stochastic and deterministic models become apparent. For illustrative purposes, we again consider the simplified scenario where both collateral spreads $q_1(t)$ and $q_2(t)$ have constant expectations.
The deterministic model is thus described by
\begin{equation}
\hat q_i(t) \equiv \hat q_i \in \R,\ i \in \{1, 2\}.
\end{equation}

As outlined in \Cref{rem:HW-decomp}, the corresponding stochastic model can be given by 
\begin{align}
q_i(t) &= \hat q_i + u_i(t), & t\in[t_0, T] \\
\d u_i(t) &= -\kappa_i u_i(t) \d t + \xi_i \d W_i(t), &  u_i(t_0) = 0,
\end{align}
for $ i\in\{1, 2\}$. One can show that this is equivalent to the model described in \eqref{eq:HWspread} with long-term mean parameters constant and equal to the initial value $\hat q_i$.% for all $t\in[t_0, T]$.
\par
We consider the sensitivity of the CTD discount factor to changes in $\hat q_2$, i.e. the (average) level of the first collateral spread remains fixed and the (average) level of the second collateral spread is varied. We again use the notation $\CTD(t_0, T; \hat q_2)$ to indicate the CTD discount factor obtained under the choice of  $\hat q_2$. Then, the following sensitivity is considered:
\begin{equation}
\frac{\partial \CTD(t_0, T)}{\partial \hat q_2} = 
\lim\limits_{\varepsilon \to 0} \frac{\CTD(t_0, T; \hat q_2 + \varepsilon) - \CTD(t_0, T; \hat q_2)}{\varepsilon}.
\end{equation}
For comparison of the stochastic and the deterministic model, we consider the two-sided finite difference schemes
\begin{align}
\frac{\partial \CTD(t_0, T)}{\partial \hat q_2} \approx \frac{\CTD_\CF(t_0, T; \hat q_2 + \varepsilon) - \CTD_\CF(t_0, T; \hat q_2- \varepsilon)}{2\varepsilon},
\end{align}
for evaluation under the stochastic model, and
\begin{align}
\frac{\partial \CTD(t_0, T)}{\partial \hat q_2} \approx \frac{\CTD_\det(t_0, T; \hat q_2 + \varepsilon) - \CTD_\det(t_0, T; \hat q_2- \varepsilon)}{2\varepsilon},
\end{align}
for evaluation under the deterministic model, respectively. The results are presented in \Cref{fig:delta-sensi}, where the graph on the left shows the resulting discount factors $\CTD_\CF(t_0, T; \hat q_2)$ and $\CTD_\det(t_0, T; \hat q_2)$ and the graph on the right shows the corresponding difference quotients.
\par
% Interpretation of the delta sensi graph
First, the left graph of \Cref{fig:delta-sensi} shows what was discussed in \Cref{sec:detctd}, the deterministic model undervalues the price of the collateral choice option. Secondly, the deterministic model does not consider changes to the spread which is not the maximal spread. {Therefore, for} values $\hat q_2 < \hat q_1$, the value of the CTD discount factor remains unchanged. The same can be observed in the graph of the sensitivity on the right. Moreover, as $\hat q_2$ becomes larger than $\hat q_1$, the deterministic model overestimates the sensitivity. This can be attributed to the ``binary'' nature of the deterministic model. Once it sees the second collateral spread as maximal, any changes applied to the second collateral spread fully impact the CTD discount factor. {Analogously, after this point the deterministic model loses sensitivity to the first collateral spread.} The stochastic model, on the other hand, exhibits a ``dampening'' behaviour, where the sensitivity impact of changes to the long-term mean $\hat q_2$ is related to the probability of $q_2(t)$ being the maximal spread. As we get to the hedging of the collateral choice option in the next section, we will see this behaviour of the two models reflected in the %choice of weights in the 
hedging portfolio.
%
%
%%%%%%%%%%%%%%%%%%%%%%%%%%%%%%%%%%%%%%%%%%%%%%%%%%%%%%%%%%%%%%%%%%%%%%%%%%%%%%%%
%%%%%%%%%%%%%%%%%%%%%%%%%%%%%%%%%%%%%%%%%%%%%%%%%%%%%%%%%%%%%%%%%%%%%%%%%%%%%%%%
\section{Hedging under the collateral choice option}\label{sec:CTDhedging}
%%%%%%%%%%%%%%%%%%%%%%%%%%%%%%%%%%%%%%%%%%%%%%%%%%%%%%%%%%%%%%%%%%%%%%%%%%%%%%%%
%%%%%%%%%%%%%%%%%%%%%%%%%%%%%%%%%%%%%%%%%%%%%%%%%%%%%%%%%%%%%%%%%%%%%%%%%%%%%%%%
The presence of a collateral choice option impacts more than the valuation of assets. From a risk management perspective, there are two implications. First, hedging strategies need to be adapted to the presence of the collateral choice option, which changes the asset valuation. Regarding this effect, we show that a precise valuation of the collateral choice option is crucial in \Cref{sec:Deltahedge}. 
{Secondly, the collateral choice option itself} introduces risk factors which should be hedged against. In \Cref{ref:ctdhedging}, we introduce strategies for the hedging of an asset {equipped} with the collateral choice model with foreign and domestic zero-coupon bonds. These hedging strategies are based on both the deterministic and the stochastic framework of modelling the collateral choice option.
\par
We only consider {cases} where the hedging instruments are not equipped with the collateral choice option. Otherwise, if the same collateral choice is available to the hedging instruments, back-to-back transactions become possible in which cash flows of the hedged asset are offset by cash flows of the hedging instruments, which neutralizes all risks associated {with} the collateral choice option and consequently dissolves the associated hedging problem.
\subsection{Adaption of the classic hedging strategy}\label{sec:Deltahedge}
The presence of the collateral choice option requires adaptions of classical hedging strategies. Exemplarily, we consider again the interest rate swaps $V^c$ and $V$ defined in \eqref{eq:CTDswap} and \eqref{eq:noctdswap}, with and without the collateral choice option, respectively.
Both swaps are linear products in the underlying bonds $P(t, T_k)$, but the swap $V^c$ has additional CTD discount factors in each summand. As always, the expectation is taken under the domestic measure $\Q_0$ or under forward measures $\Q_0^{T_k}$, depending on the modelling choices described in Section~\ref{sec:df}.
\par
The classic hedging strategy for the uncollateralized interest rate swap $V$ is given by a hedging portfolio, denoted by $\Pi(t)$, 
which consists of the swap and a linear combination of zero-coupon bonds $P(t_0, T_k)$, $T_k\in\{T_1, \dots, T_m\}$. In such a case of a linear asset, the portfolio $\Pi(t)$ 
is static, i.e.\ it can be set up once at initial time $t_0$ for the entire maturity and it perfectly hedges any price changes in the underlyings, $P(0, T_k)$. 
\par
When the collateral choice option is added, the pricing formula for the swap changes to the form given in \eqref{eq:CTDswap} and an adaption of the classic hedging strategy becomes necessary to accommodate the CTD discount factor terms. 
\par
As per our requirement, the collateral choice option is not available for the hedging instruments. Without this option, the CTD discount factors $\CTD(t, T_k)$ are difficult to replicate with market instruments, as the involved quantities, the collateral spreads, are not tradeable assets themselves. In the following, we demonstrate that a synthetic replication of the discount factor can be used to circumvent this problem.
\par
Since the value of the CTD discount factor is not directly available from market data and does not admit an analytical formula either, it remains to be approximated. In this example, we consider three hedged portfolios based on different approximation schemes: 
In the first scheme, the CTD discount factors are ignored, which corresponds to the approximation $\CTD(t, T_k) \approx %C_{\text{none}}(t,T_k) 
 1$. This results in the standard hedged portfolio $\Pi(t)$ as described for the uncollateralized swap $V$ above. \par
For the second scheme, we consider the deterministic model $\CTD(t, T_k) \approx \CTD_\det(t, T_k)$ described in \Cref{sec:detctd}. Finally, for the third scheme, we consider a stochastic model with the common factor discount factors $\CTD(t, T_k) \approx \CTD_\CF(t, T_k)$ outlined in \Cref{sec:stochmodel}.
% \par
% To this end, denote $\tilde V^c(t)$ the price of the swap $V$ as seen at time $t$ with pre-emptive knowledge of the collateral spreads, such that it incorporates the ``true'' CTD discount factors $\CTD(t, T_k)$ for all times $t$ and payment dates $T_k$.
% The price of $\tilde V^c(t)$, given in \eqref{eq:CTDswap}, is a linear combination of zero coupon bonds $P(t, T_k)$ with weights $\mathrm{CTD}(t, T_k)$  $\bar N \tau_k (\ell(t, T_{k-1}, T_k) - K)$. 
% \par
% We consider hedging strategies that do not make use of the collateral choice option. That is, the CTD discount factors $\mathrm{CTD}(t, T_k)$ are not available to the hedging instruments and need to be synthetically constructed. To this end, at any time $t$ at which the hedging portfolio is updated, different estimators $\alpha_j(t, T_k)$ of the CTD discount factors are computed.
%\par
%We additionally consider a hedging strategy that ignores any CTD effects, i.e.\ it assumes the swap with collateral choice option $V^c$ to have the same price as the swap without the option, $V$. This corresponds to the estimation
%\begin{equation}
% \CTD_{\none}(t, T_k) \equiv 1.
%\end{equation}
%\par
\par
Since the collateral choice option is not available to the hedging instruments, we {equip} them with synthetic discount factors which are applied to their respective notionals. We therefore construct hedging portfolios $\Pi^c_j$, $j\in\{0, 1, 2\}$ corresponding to the schemes outlined before by use of the synthetic discount factors $C_0(t, T_k) \equiv 1$, $C_1(t, T_k) = \CTD_\det(t, T_k)$ and $C_2(t, T_k) = \CTD_\CF(t, T_k)$.

The resulting hedging portfolios %$\Pi_1, \Pi_2, \Pi_3$ 
are %given by
\begin{equation}
\Pi^c_{j}(t) = V^c(t) - \sum_{T_k \geq t} C_j(t, T_k) \bar N\tau_k (\ell_k(t) - K)  P(t, T_k),
\end{equation}
for $j\in\{0, 1, 2\}$.
%Compared to the classic delta-hedged portfolio $\Pi(t) = \Pi^c_{\none}(t)$, each bond in the portfolios $\Pi^c_{\det}$ and $\Pi^c_{\CF}$ has its notional scaled by the additional factor $\CTD_j(t, T_k)$.
\par
Notably, $\Pi^c_0$ is the standard hedging portfolio of a swap without the collateral choice option.
The portfolios $\Pi^c_1$ and $\Pi^c_2$, which replicate the CTD discount factor, are no longer static hedging portfolios, as the synthetic replication of the discount factors requires a dynamic rescaling of the notional amounts as time progresses.
\par
We compare the performance of the hedging portfolios by the distribution of their associated P\&L accounts at final maturity $T_m$.
If it is ensured that the set of rebalancing times of the hedging portfolios, which we denote by $\mathcal{T}^\Pi$, includes the payment dates $T_k$, % = \lim_{t \uparrow {T_k}} t$, 
then the P\&L accounts are given by 
\begin{align}
\PnL_j(t_0) &= \Pi^c_j(t_0), \\ %= 0 \\
\PnL_j(t_\ell) &= \PnL_j(t_{\ell-1}) \e^{\int_{t_{\ell-1}}^{t_\ell}r_0(s)\d s} + (\Pi^c_{j}(t_{\ell}) - \Pi^c_{j}(t_{\ell-1})),
% + \mathcal{C}(t_k).
\end{align}
for all $t_\ell \in \mathcal{T}^\Pi$.
%Here, we denote $\mathcal{C}(t_k)$ the realized cash flows between $t_{k-1}$ and $t_k$, i.e.\ . 
%If we ensure that the hedging portfolios are updated at times $t_k = T_k^- = \lim_{t \uparrow {T_k}} t$, right before the cash flows of the swap, 
%these net cash flows $\mathcal{C}$ will be zero as $\CTD(T_k^-, T_k) = 1 = C_j(T_k^-, T_k)$.
By requiring that $T_k \in \mathcal{T}^\Pi$, we guarantee that the synthetic discount factors correspond to the actual CTD discount factors, $\CTD(T_k, T_k) = 1 = C_j(T_k, T_k)$. This ensures that the realized cash flows from the swap $V^c$ at times $T_k$ are perfectly neutralized by the hedge and no residuals are accumulated in the P\&L account which would distort the results.
\par
Figure~\ref{fig:PnL-hist} exemplarily shows the distribution of the three P\&L accounts $\PnL_j(T_m)$ at final maturity for $j\in\{0,1,2\}$. In the experiment, the market is simulated with the mean-reverting stochastic dynamics for the collateral spreads $q_i,\ i\in\{1, \dots, N\}$ given in \eqref{eq:HWspread}, and the same type of dynamics for the domestic collateral rate $r_0$. From the figure, it becomes apparent that the presence of the collateral choice option should not be discarded in hedging, and that the quality of the synthetic replication of the CTD discount factor can play a significant role. % particularly so when it is accumulated over the maturity of a swap which can easily span multiple decades.
%already a synthetic replication of the collateral choice option has a large effect on the PnL distributions. When stochastic dynamics of the collateral spreads are assumed, a significantly narrower variance is obtained through the use of the common factor model over a deterministic model.
%\par
% However, as we will show in the next section, the variance of the hedging portfolio's P\&L distribution can be reduced further by accounting for the risk presented by the collateral rates themselves.
\begin{figure}
\centering
\includegraphics[width=.7\textwidth]{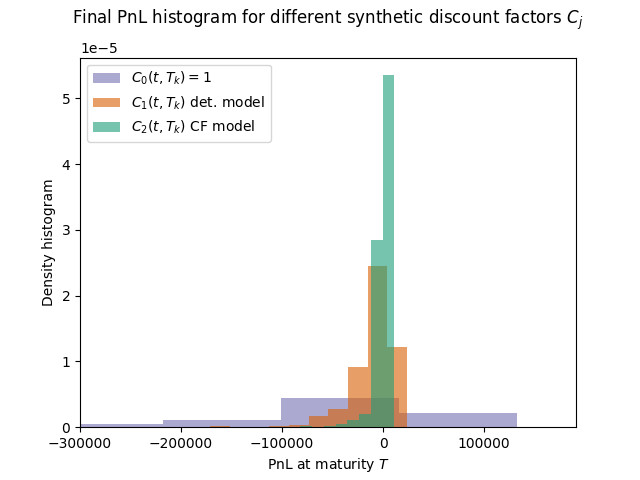}
\caption[]{Final P\&L distribution histogram of a $10$ year interest rate swap with {the} collateral choice option, hedged with synthetic replication of the cheapest-to-deliver discount factors obtained from the deterministic, {respectively stochastic, model}. The notional is $10^7$ units of the domestic currency.}
\label{fig:PnL-hist}
\end{figure}
%%%%%%%%%%%%%%%%%%%%%%%%%%%%%%%%%%%%%%%%%%%%%%%%%%%%%%%%%%%%%%%%%%%%%%%%%%%%%%%%
%%%%%%%%%%%%%%%%%%%%%%%%%%%%%%%%%%%%%%%%%%%%%%%%%%%%%%%%%%%%%%%%%%%%%%%%%%%%%%%%
\subsection{Hedging of the collateral choice risk}\label{ref:ctdhedging}
%%%%%%%%%%%%%%%%%%%%%%%%%%%%%%%%%%%%%%%%%%%%%%%%%%%%%%%%%%%%%%%%%%%%%%%%%%%%%%%%
In the previous section, we demonstrated the immediate impact of the collateral choice option on asset valuation and its consequences in hedging. The implied strategy of synthetically replicating the CTD discount factors is not very practical, as it requires a frequent rebalancing of the hedging portfolio which is associated with transaction costs.
\par
In this section, we introduce static hedging strategies for assets with the collateral choice option, which require no rebalancing and take into consideration the risks added by the collateral choice option. As detailed before, we do not allow for the hedging instruments to carry the collateral choice option.
\par
To this end, we consider the hedging problem of a domestic zero-coupon bond which is equipped with the collateral choice option. Zero-coupon bonds form an important building block for many interest rate products, and, as we saw in \eqref{eq:CTDswap}, the CTD discount factor directly acts on the bonds. Therefore, analysing the hedging of a zero-coupon bond is the first step toward further-reaching conclusions. 
\par
 We denote the domestic zero-coupon bond equipped with the collateral choice option by $P^c$, its price is given by
\begin{equation}\label{eq:Uc}
P^c(t, T) = \E^{\Q_0}_t[\e^{-\int_t^T \max(0, q_1(s), \dots, q_N(s))\d s}\e^{-\int_t^T r_0(s)\ds}]=\CTD(t, T)P(t,T),
\end{equation}
for all $t\in[t_0, T]$. As before, $\CTD(t, T)$ denotes the CTD discount factor defined in \eqref{eq:CTDfactor} under the domestic measure $\Q_0$ or the forward measure $\Q_0^T$, according to the dependence modelling of collateral spreads and domestic interest rate.
%%%%%%%%%%%%%%%%%%%%%%%%%%%%%%%%%%%%%%%%%%%%%%%%%%%%%%%%%%%%%%%%%%%%%%%%%%%%%%%%
\par
We propose to hedge the asset $P^c$ with combinations of the domestic zero-coupon bond and the foreign zero-coupon bonds associated with the collateral currencies which appear in the asset's collateral choice option. {For the sake of demonstration, we express the domestic prices of foreign zero-coupon bonds in terms of the FX-adjusted collateral rates $r_i$, $i\in\{1, \dots, N\}$, to simplify valuation under the domestic measure $\Q_0$.% A detailed treatise of such cross-currency markets is given in \cite{musiela}.}
%It should be remarked that these zero-coupon bonds differ from the actual foreign zero-coupon bonds by their payoff, which is a unit of the domestic currency instead of the foreign currency.
%Such FX-adjusted instruments are not liquidly-traded products, see for example the related discussion on calibration in \cite{SankovichZhu}. A practical implementation should be based on the actual foreign zero-coupon bonds.
%\par
%{We define the domestic price of foreign zero-coupon bonds directly in relation to the collateral spreads $q_i(t)$.
\begin{dfn}[Domestic price of a foreign zero-coupon bond]\label{def:spreadbond}
A foreign zero-coupon bond pays one unit of foreign currency at maturity $T$. Its domestic price at time $t\in[t_0, T]$ is obtained by discounting under the FX-adjusted foreign interest rate $r_i(s)$, 
\begin{equation}
Q_i(t, T) = \E^{\Q_0}_t\Bigl[ \exp\bigl(-\int_t^T r_i(s)\d s\bigr) \Bigr].
\end{equation}
\end{dfn}
}
We remark that the foreign zero-coupon bond can be expressed by using the domestic discount factor and a discount factor based on the collateral spread $q_i(s)$,
\begin{equation}\label{eq:collspreadbond}
Q_i(t, T) := \E^{\Q_0}_t\Bigl[\exp\bigl(-\int_t^{T} q_i(s) \d s\bigr) \exp\bigl(-\int_t^T r_0(s)\d s\bigr)\Bigr].
\end{equation}

Analogous to the discussion in \Cref{sec:df}, we can write
\begin{equation}
Q_i(t, T) = \E^{\Q^*}_t\Bigl[\exp\bigl(-\int_t^{T} q_i(s) \d s\bigr)\Bigr] P(t, T),
\end{equation}
for $\Q^* \in\{\Q_0, \Q_0^T\}$ depending on the modelling of the collateral spreads.
\begin{rem}
With the convention that $q_0(t) = 0$, we can express the domestic zero-coupon bond $P$ in this framework as 
\begin{equation}
Q_0(t, T) = \E[\exp(-\int_t^T q_0(s) \d s)] P(t, T) = P(t, T).
\end{equation}
%According to the definition of the collateral spread in \eqref{eq:spreaddef}, $q_i = r_i - r_0$, we can equivalently express the collateral spread bond as
%\begin{equation}
%Q_i(t, T) = \E^{\Q_0}_t\Bigl[ \exp\bigl(-\int_t^T r_i(s)\d s\bigr) \Bigr],
%\end{equation}
%which is the foreign zero coupon bond associated with the currency of collateral rate $r_i, i\in\{1,\dots,N\}$ expressed in the domestic market by the FX-adjusted collateral rate.
%Analogously, with the convention that $q_0(t) = 0$, the bond $Q_0(t, T) = P(t, T)$ is simply the domestic zero-coupon bond.
\end{rem}
\par
For the remainder of this section, we facilitate the setting with the assumptions of \eqref{eq:Qfactorization}, independence between the collateral spreads $q_i(t)$, $i\in\{1, \dots, N\}$ and the domestic interest rate $r_0(t)$. Unless indicated differently, all following expectations are taken under the $\Q_0$ domestic measure.
\par
% CONSTRUCTION OF THE STOCHASTIC, STATIC PORTFOLIO
First, we present a hedging strategy that aims to eliminate the risks introduced through the collateral choice option from the asset $P^c(t, T)$, based on the assumption of stochastic collateral spread dynamics laid out in \Cref{sec:stochmodel}. The underlying idea is to find weights $\alpha = (\alpha_0, \dots, \alpha_N)$ which correspond to each bond $Q_i$ %, as defined in \Cref{def:spreadbond}, 
in a way that accounts for the probabilities with which each collateral spread is the maximal spread during the lifetime of the asset. As indicated in \Cref{sec:CTD-sensi}, the CTD discount factor depends on many risk factors that should be taken into consideration.
\par
Therefore, we propose a static hedging portfolio based on variance minimization with zero initial price. % of the hedging portfolio. 
To this end, define the stochastic strategy CTD-hedging portfolio
\begin{equation}\label{eq:Pistoch0}
\Pi_\stoch (t_0) := P^c(t_0, T) + \sum\limits_{i=0}^N \alpha_i Q_i(t_0, T) + \mathcal{C}_\stoch(t_0),
\end{equation}
where $\mathcal{C}_\stoch(t_0)$ denotes the cash account offsetting the transactions at initial time $t_0$,
\begin{align}
\mathcal{C}_\stoch(t_0) &= - \bigl(P^c(t_0, T) + \sum\limits_{i=0}^N \alpha_i Q_i(t_0, T)\bigr) \nonumber \\
&= - \Bigl( \E\Bigl[ \e^{-\int_{t_0}^T \max(0, q_1(s),\dots, q_N(s))\d s}\Bigr] + \sum\limits_{i=0}^N \alpha_i \E\Bigl[\e^{-\int_{t_0}^T q_i(s)\d s} \Bigr] \Bigr) P(t_0, T)
\end{align}
The value of the hedging portfolio at times $t_0 \leq t \leq T$ is then given by
\begin{align}\label{eq:Pistoch}
\Pi_\stoch (t) := P^c(t, T) + \sum\limits_{i=0}^N \alpha_i Q_i(t, T) + \mathcal{C}_\stoch(t_0)B(t_0, t).%\e^{\int_{t_0}^t r_0(s)\d s}.
\end{align}
Here, $B(t_0, t) = \exp(\int_{t_0}^t r_0(s)\d s)$ denotes the ``bank account'', which is the num\'eraire of the $\Q_0$ measure and reflects the accrual of the cash amount (positive or negative) held.
In the following, we will define a random variable $\pi(\alpha)$ related to the hedging portfolio's performance across its lifetime $[t_0, T]$ as a function of the hedging weights $\alpha$.
\begin{dfn}\label{defi:performancevar}
Let $\widetilde P^c(t_0, T)$ and $\widetilde Q_i(t_0, T)$, $i\in\{0,\dots, N\}$ be the random variables associated with the collateral choice zero-coupon bond $P^c(t_0, T)$ and the bonds $Q_i(t_0, T)$. They are given by
\begin{align}
\widetilde P^c(t_0, T) &= \exp\Bigl(-\int_{t_0}^T \max\bigl(0, q_1(s), \dots, q_N(s)\bigr)\d s\Bigr) \exp\Bigl(-\int_{t_0}^T r_0(s)\ds\Bigr)  \\
\widetilde Q_i(t_0, T) &= \exp\Bigl(-\int_{t_0}^T q_i(s) \d s\Bigr) \exp\Bigl(-\int_{t_0}^T r_0(s)\ds\Bigr),\quad 1 \leq i \leq N, \nonumber \\
\widetilde Q_0(t_0, T) &= \exp\Bigl(-\int_{t_0}^T r_0(s) \d s\Bigr).
\end{align}
%We analogously define a random variable $\mathcal{\widetilde C}(t_0)$ which corresponds to the cash account $\mathcal{C}_\stoch(t_0)$
%\begin{align}\label{eq:cashrv}
%\mathcal{\widetilde C}(t_0) = \mathcal{C}_\stoch(t_0)B(t_0, t_0) = \mathcal{C}_\stoch(t_0),
%\end{align}
%and is constant at time $t_0$.
Finally, let $\pi(\alpha)$ be the performance random variable of the hedging portfolio $\Pi_\stoch(t_0)$ over its lifetime, defined by
\begin{equation}\label{eq:defpialpha}
\pi(\alpha) = \widetilde P^c(t_0, T) + \sum\limits_{i=0}^N \alpha_i \widetilde Q_i(t_0, T) + \mathcal{C}_\stoch(t_0).
%\e^{-\int_{t_0}^T \max\bigl(0, q_1(s), \dots, q_N(s)\bigr)\d s} + \sum\limits_{i=1}^N \alpha_i \e^{-\int_{t_0}^T q_i(s) \d s} + \mathcal{C}(t_0)\e^{-\int_{t_0}^T r_0(s)\d s}. %  (integral is wrong ???).
\end{equation}
\end{dfn}
With these ingredients at hand, we can define the portfolio variance minimization by which the weights $\alpha = (\alpha_0, \dots, \alpha_N)$ are found. 
The variance of $\pi(\alpha)$ is given by
\begin{align}
\Var[\pi(\alpha)] &= \Var\bigl[\widetilde P^c(t_0, T) + \sum\limits_{i=0}^N \alpha_i \widetilde Q_i(t_0, T)\bigr],
\end{align}
where we used that $\mathcal{C}_\stoch(t_0)$ is constant.
We find the weights $\alpha_i$, $i\in\{0,\dots, N\}$ by minimizing the variance of the random variable $\pi(\alpha)$, 
\begin{equation}\label{eq:minimizevar}
(\alpha_0, \dots, \alpha_N) = \mathrm{argmin}\Bigr\{\alpha' \in [-1, 1]^{N+1}\colon  \Var[\pi(\alpha')]\Bigr\}. %\Var\Bigl[\widetilde P^c(t_0, T) +  \sum\limits_{i=1}^N \alpha_i' \widetilde Q_i(t_0, T)\Bigr]\Bigr\}.
\end{equation}
Note that the minimizing weights $\alpha$ obtained in \eqref{eq:minimizevar} are equal to the weights obtained from minimizing
%
%\begin{align}\label{eq:f-varmin}
%f(\alpha) := &\Var\Bigl[\sum_{i=0}^N \alpha_i \widetilde Q_i(t_0)\Bigr] + 2\Cov\Bigl[\widetilde P^c(t_0, T), \sum_{i=0}^N \alpha_i \widetilde Q_i(t_0, T)\Bigr] 
%\end{align}
\begin{align}\label{eq:f-varmin}
f(\alpha) = &\sum_{i=0}^N \alpha_i^2 \Var[\widetilde Q_i(t_0, T)] + \sum_{\overset{i,j=0}{j\neq i}}^N \alpha_i \alpha_j \Cov[\widetilde Q_i(t_0, T), \widetilde Q_j(t_0, T)] \nonumber \\
&+ 2 \sum_{i=0}^N \alpha_i \Cov[\widetilde P^c(t_0, T), \widetilde Q_i(t_0, T)].
\end{align}
The variance and covariance terms in $f(\alpha)$ can be expressed analytically, respectively semi-analytically with the common factor approximation for the term containing $\widetilde P^c(t_0, T)$, due to the embedded CTD discount factor. Details on this procedure are given in \ref{appx:varianceminimization}.

%\begin{multline}\label{eq:minimizevar}
%\mathrm{argmin}\{\alpha \in [-1, 1]^N\colon \Var[\pi(\alpha)] \} \\
%= \mathrm{argmin}\{\alpha \in [-1, 1]^N\colon\Var\Bigl[\e^{-\int_{t_0}^T \max\bigl(0, q_1(s), \dots, q_N(s)\bigr)\d s}
%+ \sum\limits_{i=1}^N \alpha_i \e^{-\int_{t_0}^T q_i(s) \d s}\Bigr]\} .
%\end{multline}
%Here, we use that the domestic interest rate is assumed to be zero, $r_0 \equiv 0$, and thus the cash account term remains constant and can be ignored for the sake of minimization. From here on, all expectations are taken under the domestic $\Q_0$ measure. In the following lemmata, we show that the Hull--White process structure of \Cref{sec:stochmodel} allows for the variance in \eqref{eq:minimizevar} to be expressed by a combination of terms which are either analytically solvable or can be estimated with the common factor approach \eqref{eq:CFCTD} introduced for estimating the CTD discount factor. 
\par
%%%%%%%%%%%%%%%%%%%%%%%%
% DETERMINISTIC SECTION
%%%%%%%%%%%%%%%%%%%%%%%%
%%%%%%%%%%%%%%%%%%%%%%%%
%%%%%%%%%%%%%%%%%%%%%%%%
We contrast the obtained hedging portfolio with another static hedging portfolio, based on deterministic modelling of the collateral spreads. This second portfolio ensures that the deterministic maximum spread obtained from the deterministic collateral spread forecast is matched precisely. 
\par
This is achieved by a combination of the asset $P^c(t_0, T)$ with bonds $Q_i(t_0, T)$ and forward contracts $F_i(t_0, S, T)$, which deliver bonds $Q_i(S, T)$ at times $S\in[t_0, T]$. The reasoning behind the portfolio comes from the perceived collateral spread structure in a deterministic model. The deterministic collateral spreads $\hat q_i(t)$ imply deterministic crossing times, which are the times where the spread that attains the maximum changes. Based on these, we can set up a static portfolio at time $t_0$, where the corresponding bond to the deterministic maximum is contained at every time $t_0 \leq t \leq T$.
\par
We begin with two definitions needed for the deterministic strategy. 
\begin{dfn}[Forward contract on a foreign zero-coupon bond in the domestic currency]
Let the forward contract $F_i(t, S, T)$, $i\in\{1,\dots,N\}$ deliver a {foreign zero-coupon bond denominated in the domestic currency} at time $S$ with maturity $T$. 
%Analogously to forward contracts on classic zero-coupon bonds, 
No-arbitrage conditions dictate that the forward contract is equivalent to 
\begin{equation}
F_i(t, S, T) = \frac{Q_i(t, T)}{P(t, S)}.
\end{equation}
Let the forward contract $F_0(t, S, T)$ analogously deliver the domestic zero coupon bond,
\begin{equation}
F_0(t, S, T) = \frac{P(t, T)}{P(t, S)}.
\end{equation}
\end{dfn}
%{[Question: is the independence assumption between $r_0$ and $q_i$ implicitly used here?]}
%We refer to e.g.\ \cite{GrzelakOosterlee} for a description of the zero-coupon bond case.
\begin{dfn}[Maximal spread crossing time]
Let the maximum over the deterministic collateral spreads for all times $t \in [t_0, T]$ be denoted by
\begin{equation}
\hat M(t) := \max(\hat q_0(t), \hat q_1(t), \dots, \hat q_N(t)).
\end{equation}
\par
Let the crossing time $T^C_k$ be the $k$th time that the collateral spread which attains the maximum changes. 
%Let the crossing time $T^C_k$ be the $k$th time at which the collateral spread attaining the maximum changes. 
The crossing times are defined iteratively for $k\geq 1$ by
\begin{align}
T^C_0 &:= t_0, \nonumber \\
T^C_k & := \inf\bigl\{t > T^C_{k-1}\colon \{0 \leq i \leq N \colon \hat q_i(t) = \hat M(t)\} \neq \{0 \leq j \leq N \colon \hat q_j(T^C_{k-1}) = \hat M(T^C_{k-1})\} \bigr\}.
\end{align}
Let $\mathbb{S}$ be the set of crossing times before maturity, $\mathbb{S} = \{T^C_k\colon k\geq0, T^C_k < T\}$.
This set is nonempty as $t_0 =  T^C_0 \in \mathbb{S}$.
\end{dfn}
We recall that $\hat q_0(t) = 0$ by the definition of the collateral spreads in \eqref{eq:spreaddef}.
Without loss of generality, let exactly one of the collateral spreads $\hat q_i(t)$ attain the maximum over any given time interval. For intervals where this does not hold, these maximal collateral spreads are indistinguishable for our purposes and further criteria, like preferences of the collateral posting party, can be considered to create a distinction.
\par
Let $i_k \in \{0, \dots, N\}$ denote the index of the collateral spread $\hat q_i(t)$ which is maximal on the interval $[T^C_{k}, T^C_{k+1})$ for $k\geq 0$. 
Under the deterministic CTD model, the asset is priced as
\begin{align}
\hat P^c(t, T) &= \E\Bigl[\exp\Bigl(-\int_t^T \hat M(s) \d s\Bigr) P(t, T) \Bigr] \nonumber \\
&= \E\Bigl[ \exp\Bigl(-\int_t^T \sum\limits_{T_k^C\in\mathbb{S}} \hat q_{i_k}(s) \1_{[T_k^C, T^C_{k+1})}(s) \d s\Bigr)\Bigr] P(t, T).
\end{align}
The optimal deterministic CTD-hedging strategy is then given by a portfolio which consists of the asset $P^c(t, T)$, and is short in exactly one unit of the {optimal} bond $Q_{i_k}(t, T)$ during times $t\in[T^C_{k}, T^C_{k+1})$, $k\geq 0$.
It is obtained by the following procedure:
\par
\begin{equation*}
\text{For each } T_k^C \in \mathbb{S}, \text{ add:}%to the portfolio:}
\begin{cases}
-F_{i_k}(t_0, T_k^C, T) + F_{i_k}(t_0, T_{k+1}^C, T), & \text{ if } T_{k+1}^C \in \mathbb{S}, \\
-F_{i_k}(t_0, T_k^C, T), & \text{ if } T_{k+1}^C \not\in \mathbb{S},
\end{cases}
\end{equation*}
to the portfolio.
%For each $T_k^C \in \mathbb{S}$ if $T_{k+1}^C \in \mathbb{S}$, add $-F_{i_k}(t_0, T_k^C, T) + F_{i_k}(t_0, T_{k+1}^C, T)$ to the hedging portfolio. If $T_{k+1}^C \geq T$, add $-F_{i_k}(t_0, T_k^C, T)$.
%
The static, deterministic strategy CTD-hedging portfolio with zero initial price is thus given by
\begin{equation}\label{eq:Pidet}
\Pi_\det(t) := P^c(t, T) + \sum\limits_{T^C_{k+1}\in\mathbb{S}} \Bigl(-F_{i_k}(t, T^C_{k}, T) + F_{i_k}(t, T^C_{k+1}, T)\Bigr) -  F_{i_m}(t, T^C_m, T)  + \mathcal{C}_\det(t),
\end{equation}
where $T^C_m= \max\{T^C_k \in \mathbb{S}\}$ is the final crossing time before maturity and $i_m$ the associated index of the collateral spread $\hat q_i(t)$ which is maximal on $[T^C_m, T)$.
Again, $\mathcal{C}_\det(t)$ denotes the cash account for the initial condition, which is determined by accrual with the domestic interest rate, $\mathcal{C}_\det(t) = \mathcal{C}_\det(t_0)B(t_0, t)$, with initial holdings $\mathcal{C}_\det(t_0)$ obtained from offsetting the transactions at $t_0$,
\begin{equation}
\mathcal{C}_\det(t_0) = - \Bigl(P^c(t_0, T) + \sum\limits_{T^C_{k+1}\in\mathbb{S}} \Bigl(-F_{i_k}(t_0, T^C_{k}, T) + F_{i_k}(t_0, T^C_{k+1}, T)\Bigr) -  F_{i_m}(t_0, T^C_m, T)\Bigr).
\end{equation}
In \eqref{eq:Pidet} above, we implicitly require physical settlement of the forward contracts, so that $F_i(t, T^C_k, T) = Q_i(t, T)$ for times $t>T^C_k$.
\par
Finally, we define a class of basic portfolios $\Pi_i$, $i\in\{0, \dots, N\}$ that simply hedge the collateral choice zero-coupon bond $P^c$ with one unit of the bond $Q_i$:
\begin{equation}\label{eq:Pibasic}
\Pi_i(t) = P^c(t, T) - Q_i(t, T) + \mathcal{C}_i(t).
\end{equation}
The cash accounts of the basic portfolios $\Pi_i$ are given by
\begin{equation}
\mathcal{C}_i(t) = - \bigl(P^c(t_0, T) - Q_i(t_0, T)\bigr) B(t_0, t).
\end{equation}
 We include the domestic currency choice $Q_0(t, T) := P(t, T)$ to obtain the portfolio
\begin{equation}
\Pi_0(t) = P^c(t, T) - P(t, T) + \mathcal{C}_0(t)B(t_0, t),
\end{equation}
which simply ignores the collateral choice option in the hedge and corresponds to the portfolio $\Pi^c_0$ in \Cref{sec:Deltahedge}. We emphasize this by writing $\Pi_\none := \Pi_0$.
In absence of crossing times $t_0 < T^C_k < T$, the basic portfolio of the maximal (cheapest-to-deliver) currency is equivalent to the deterministic strategy introduced before.
%
%%%%%%%%%%%%%%%%%%%%%%%%%%%
\section{Numerical experiments on CTD hedging}\label{sec:CTDhedgingexperiment}
%%%%%%%%%%%%%%%%%%%%%%%%%%%
%
We consider numerical experiments to compare the performances of the proposed hedging strategies. The ``stochastic strategy'' is based on stochastic collateral spread assumptions associated with the portfolio $\Pi_\stoch$ given in \eqref{eq:Pistoch}, the ``deterministic strategy'' is based on deterministic collateral spread assumptions associated with the portfolio $\Pi_\det$ given in \eqref{eq:Pidet} and the ``basic strategies'' correspond to the portfolios $\Pi_i$, $i\in\{0, \dots, N\}$ given in \eqref{eq:Pibasic}.
\par
We add the assumption that the domestic collateral rate is constant zero, $r_0(t) \equiv 0$. By this, the collateralized asset, defined in \eqref{eq:Uc},  simplifies to $P^c(t, T) = \CTD(t, T)$. This ensures that any risks encountered in the following % to the asset $P^c$ 
stem exclusively from the collateral choice option. Consequently, it holds that $P(t, T) = B(t, T) \equiv 1$ for all $t_0 \leq t \leq T$.
%This particular choice of domestic interest rate implies that all subsequent portfolio valuation results should not be understood as absolute but as relative results.
\par
We return to the setting of the experiments in \Cref{sec:CTD-sensi} with three available currencies, i.e.\ two non-zero collateral spreads $q_1(t)$ and $q_2(t)$. The speed of mean reversion parameters are set to $\kappa_1 = 0.0078$, $\kappa_2 = 0.0076$, the volatility parameters to $\xi_1 = 0.0018$, $\xi_2 = 0.0023$ and the instantaneous correlation to $\rho_{1,2} = 0.3$. We remark that throughout this section, the deterministic collateral spreads and stochastic collateral spread expectations, $\hat q_i(t) = \E[q_i(t)]$ are modelled as linear functions. This is not a general restriction of the model but a simplifying choice to facilitate interpretations.
\par
\begin{figure}[]
\centering
\includegraphics[width=.49\textwidth, height=0.40833\textwidth]{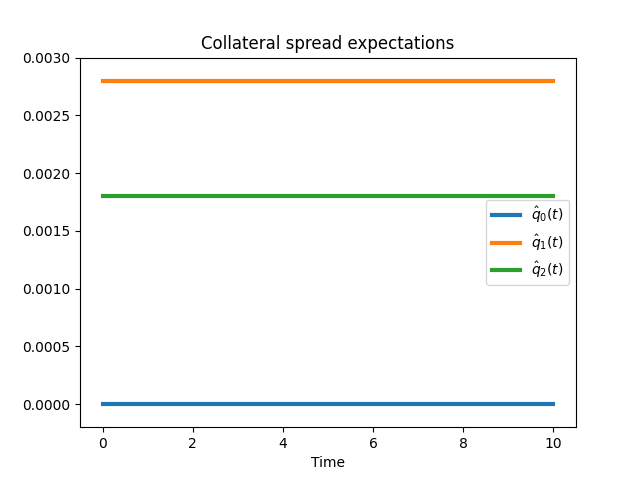}
\includegraphics[width=.49\textwidth, height=0.40833\textwidth]{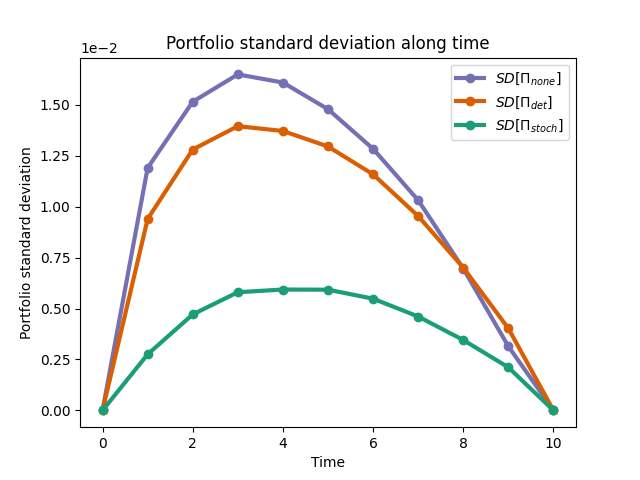}
\caption{In the first CTD-hedging experiment, the deterministic collateral spreads do not cross for the entire lifetime of the asset, $t\in[0, 10]$. In the graph on the left, the expectations of the collateral spreads are given. In the graph on the right, we give the standard deviations of the CTD-hedging portfolios $\Pi_\stoch, \Pi_\det$ and of the portfolio $\Pi_{\none}$.}
\label{fig:CTDhedge-constant-SD}
\end{figure}
\begin{figure}[]
\centering
\includegraphics[width=.49\textwidth, height=0.40833\textwidth]{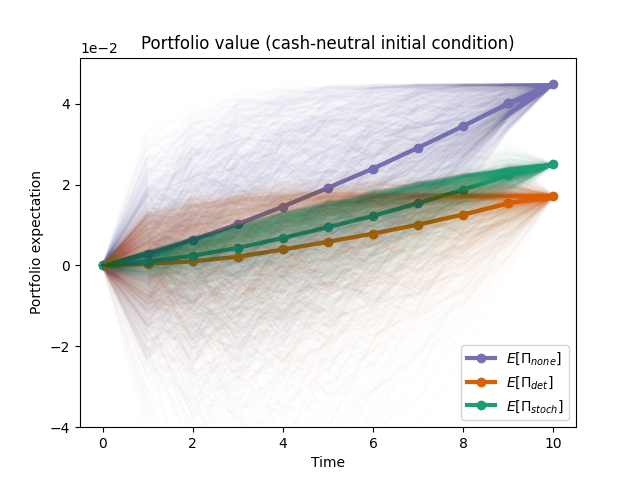}
\includegraphics[width=.49\textwidth, height=0.40833\textwidth]{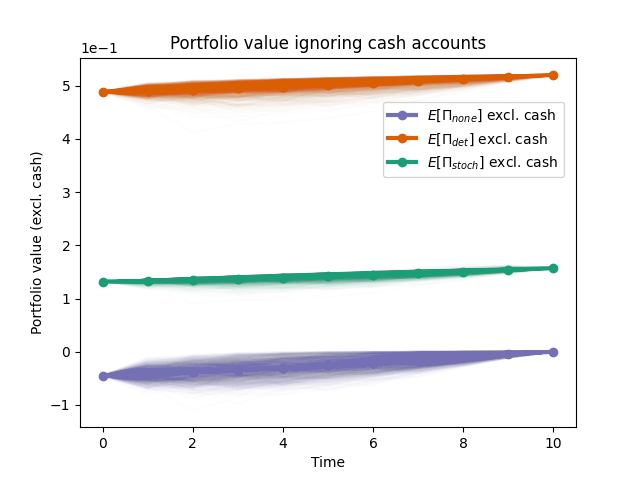}
\caption{Valuation of the CTD-hedging portfolios $\Pi_\stoch(t), \Pi_\det(t)$ and portfolio $\Pi_\none(t)$. Sample paths and expected values are provided. In the graph on the right, the portfolios are stripped of their respective cash accounts, eliminating the initial zero condition.}
\label{fig:CTDhedge-constant-E}
\end{figure}
In the first experiment, we consider a scenario of non-crossing collateral spread forecasts. That is, the deterministic collateral spreads $\hat q_i(t)$, $i\in\{0,1,2\}$, which are also the expectations of the stochastic dynamics, are chosen such that they do not cross and $\hat q_1(t)$ remains maximal for all $t$ until maturity $T=10$. A graph of the spread configuration is presented on the left in \Cref{fig:CTDhedge-constant-SD}.
\par
As there are no crossing times after $t_0$, the deterministic strategy portfolio is given by
\begin{equation}\label{eq:det-ex1}
\Pi_\det(t) = P^c(t, T) - Q_1(t, T) + \Bigl(-P^c(t_0, T) + Q_1(t_0, T)\Bigr).
\end{equation}
This is equal to the basic portfolio $\Pi_1$ and due to the clear hierarchy of collateral spreads, we abstain from simulating the basic portfolio $\Pi_2$. We do consider the basic, domestic portfolio $\Pi_0(t) = \Pi_\none(t)$, as it describes a portfolio unconcerned with the collateral choice option. As here $r_0(t) \equiv 0$, it is equal to
\begin{equation}\label{eq:Pinoneexperiment}
\Pi_\none(t) = P^c(t, T) - 1 + (-P^c(t_0, T) + 1 \Bigr).
\end{equation}
\par
The stochastic strategy portfolio is obtained with the variance minimization \eqref{eq:minimizevar}.
We note that in the constant zero domestic interest rate scenario, the variance to minimize is given by
\begin{align}\label{eq:alpha0var}
\Var\bigl[\widetilde \pi(\alpha)\bigr] &= \Var\Bigl[\widetilde P^c(t_0, T) + \alpha_0 \widetilde Q_0(t_0, T) + \alpha_1 \widetilde Q_1(t_0, T) + \alpha_2 \widetilde Q_2(t_0, T)\Bigr] \nonumber \\
&= \Var\Bigl[\widetilde P^c(t_0, T) + \alpha_1 \widetilde Q_1(t_0, T) + \alpha_2 \widetilde Q_2(t_0, T)\Bigr],
\end{align}
with the random variables defined in \Cref{defi:performancevar}. The latter equality follows from the particular scenario with $r_0(t) \equiv 0$.
This implies that the variance minimization approach leaves weight $\alpha_0$ unspecified because of the lack of stochasticity in $Q_0(t, T) = P(t, T) = 1$ of this particular scenario. However, the lack of interest accrual also implies that the weight $\alpha_0$ does not enter the portfolio with initial value zero in a meaningful way, as it cancels out with the bank account, shown in \eqref{eq:stoch-ex1} below. The weights obtained from the variance minimization procedure are $\alpha_1 \approx -0.477$ and $\alpha_2 \approx -0.361$ and 
% (-0.47727317018111426, -0.3614089752293375)
% These weights do not sum to one, $|\alpha_1| + |\alpha_2| < 1$, which is related to the convexity in inequality \eqref{eq:JensenDet}.
%
the stochastic strategy portfolio is thus equal to
\begin{align}\label{eq:stoch-ex1}
\Pi_\stoch(t) 
&= P^c(t, T)  + \alpha_0 - 0.477\, Q_1(t, T) - 0.361\, Q_2(t, T) \nonumber \\
&+ \Bigl(-P^c(t_0, T) - \alpha_0 + 0.477\, Q_1(t_0, T) + 0.361\, Q_2(t_0, T)\Bigr) \nonumber \\
&= P^c(t, T) - 0.477\, Q_1(t, T) - 0.361\, Q_2(t, T) \nonumber \\
&+ \Bigl(-P^c(t_0, T) + 0.477\, Q_1(t_0, T) + 0.361\, Q_2(t_0, T)\Bigr).
\end{align}
\par
The portfolios $\Pi_\det(t)$, $\Pi_\stoch(t)$ and $\Pi_\none(t)$ are evaluated in a Monte Carlo market simulation based on Hull--White dynamics. Market prices of the involved instruments,
$\E[Q_1(t_0, T)]$, $\E[Q_2(t_0, T)]$, and $\E[P^c(t_0, T)]$, can be obtained from the Monte Carlo simulation or, similarly, from analytical Hull--White ZCB formulas and the common factor CTD approach, respectively.
\par
In the right graph of \Cref{fig:CTDhedge-constant-SD}, the standard deviations of the portfolio simulations, defined as $\mathrm{SD}[\Pi(t)] := \sqrt{\Var[\Pi(t)]}$,  are evaluated up to maturity. As all considered portfolios have deterministic payoffs at maturity, all the standard deviations taper back to zero at time $T$. It is evident that the stochastic strategy portfolio displays by far the least standard deviation. This comes as expected since the portfolio was constructed from principles of variance minimization. The deterministic strategy hedging portfolio still displays less standard deviation than the CTD-indifferent portfolio $\Pi_\none$.
\par
In \Cref{fig:CTDhedge-constant-E} on the left, valuations of the hedging portfolios are given for portfolio sample paths and expected portfolio values. We clearly observe the initial condition $\Pi(0) =0$ for all portfolios. 
A hypothetical, perfect hedging portfolio could be obtained from the back-to-back hedge with a zero-coupon bond that has the same collateral choice option{. 
%\begin{equation}
%\Pi_\mathrm{perfect}(t) := P^c(t, T) - P^c(t, T) + \mathcal{C}_\mathrm{perfect}(t_0) \equiv 0.
%\end{equation}
This perfect portfolio vanishes throughout its lifetime.} Accordingly, we can associate the valuation of a portfolio at time $t$ with the cost of the imperfect hedging at that time. We are particularly interested in the valuations at maturity $T$, which are deterministic. Notably, the CTD-indifferent hedging portfolio $\Pi_\none(T) \approx 0.045$ carries the highest penalty, whereas the stochastic strategy portfolio $\Pi_\stoch(T) \approx 0.025$ performs slightly worse than the deterministic strategy $\Pi_\det(T) \approx 0.017$. 
In the graph on the right of \Cref{fig:CTDhedge-constant-E}, the portfolios are given without the requirement of initial price zero, i.e. $\bar \Pi_j(t) = \Pi_j(t) - \mathcal{C}_j(t)$ for $j\in\{\stoch,\det,\none\}$ and $\mathcal{C}_j$ the corresponding cash accounts. The initial valuations of these portfolios, $\bar \Pi_j(0)$, indicate the amount of cash which needs to be borrowed to enter the respective hedging portfolios. Since the uncollateralized bond $P(t_0, T)\geq P^c(t_0, T)$ is never priced lower than the collateral-choice bond, entering the strategy $\Pi_\none$ yields non-negative cash. 
In contrast, entering the deterministic hedging strategy requires funding, $\bar \Pi_\det(0) = 0.49$.
The stochastic hedging strategy takes a special role, as here the so far unspecified weight $\alpha_0$ comes into play. We can indeed choose the weight $\alpha_0$ to obtain an arbitrary initial price,
\begin{align}
\bar \Pi_\stoch(t_0) &= P^c(t_0, T)  + \alpha_0 - 0.477\, Q_1(t_0, T) - 0.361\, Q_2(t_0, T) \nonumber \\
&= +\alpha_0 + 0.132.
\end{align}
An initial choice of $\alpha_0 = -0.132$ yields a cash-neutral portfolio at inception, in which case the valuation paths are again those of the portfolio shown in the graph on the left.
%In contrast, entering the deterministic and stochastic hedging strategies requires funding, albeit significantly less so for the stochastic strategy ($\bar Pi_\stoch(0) = 0.13$) than for the deterministic strategy ($\bar Pi_\det(0) = 0.49$).
%
%stoch: -0.13200665100885023 det: -0.48879492659550317 none: 0.04480439803599473
 % 
\par
\begin{figure}[]
\centering
\includegraphics[width=.49\textwidth, height=0.40833\textwidth]{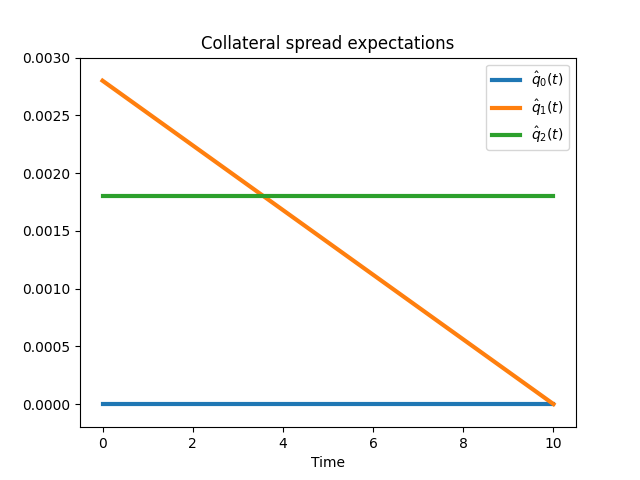}
\includegraphics[width=.49\textwidth, height=0.40833\textwidth]{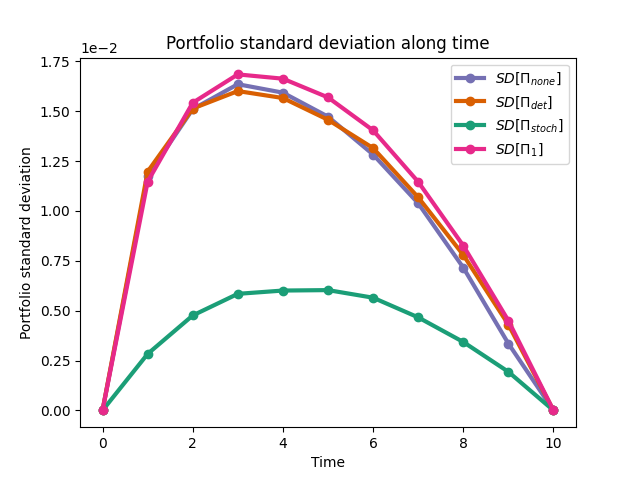}
\caption{In the second CTD-hedging experiment, the deterministic collateral spreads (respectively expectations of collateral spreads) cross.}
\label{fig:CTDhedge-crossing-SD}
\end{figure}
\begin{figure}[]
\centering
\includegraphics[width=.49\textwidth, height=0.40833\textwidth]{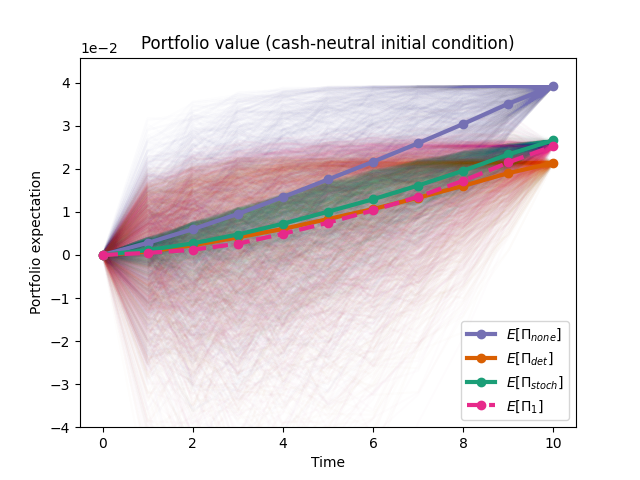}
\includegraphics[width=.49\textwidth, height=0.40833\textwidth]{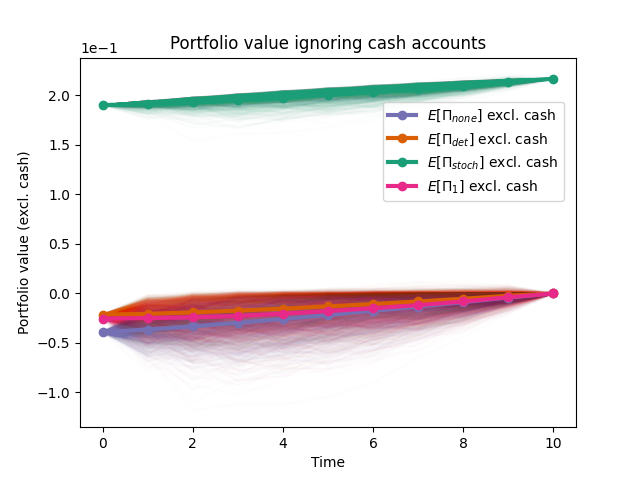}
\caption{Portfolio valuations in the second CTD-hedging experiment, where the deterministic collateral spreads cross.}
\label{fig:CTDhedge-crossing-E}
\end{figure}
In the second hedging experiment, we consider a scenario where the projections for the collateral spreads cross. Until the crossing time $T^C_1 = 3.6$, the first collateral spread $\hat q_1(t)$ is maximal, afterwards the second collateral spread $\hat q_2(t)$ is. This behaviour is depicted in the left graph of \Cref{fig:CTDhedge-crossing-SD}. The resulting deterministic strategy portfolio is given by 
\begin{align}
\Pi_\det(t) &= P^c(t, T) - Q_1(t, T) + F_1(t, T^C_1, T) - F_2(t, T^C_1, T) \nonumber \\
&+ \Bigl(-P^c(t_0, T) + Q_1(t_0, T) - F_1(t_0, T^C_1, T) + F_2(t_0, T^C_1, T)\Bigr).
\end{align}
Since the domestic interest rate is constant zero, we have $P(t, S) = 1$ for all $S\in[t_0, T]$ and therefore the forward contracts simplify to $F_i(t, S, T) = Q_i(t, T)$. Thus,  in this setting, the deterministic strategy is equal to the basic strategy $\Pi_2(t)$ with hedging instrument $Q_2(t, T)$.
We furthermore add the basic strategy with hedging portfolio $\Pi_1$, given by
\begin{equation}
\Pi_1(t) = P^c(t, T) - Q_1(t, T) + \Bigl(-P^c(t_0, T) + Q_1(t_0, T) \Bigr),
\end{equation}
and the hedging strategy which ignores the CTD-discount factor, $\Pi_\none(t)$, given in \eqref{eq:Pinoneexperiment}.
\par
From portfolio variance minimization, we obtain the hedging weights of the stochastic strategy, $\alpha_1\approx- 0.343$, $ \alpha_2 \approx-0.436$ and the corresponding stochastic strategy portfolio
%(-0.34295031253501684, -0.4364426991492678)
\begin{align}
\Pi_\stoch(t) &= P^c(t, T) - 0.343\, Q_1(t, T) -0.436\, Q_2(t, T) \nonumber \\
&+ \Bigl(-P^c(t_0, T) + 0.343\, Q_1(t_0, T) +0.436\, Q_2(t_0, T)\Bigr).
\end{align}
We refer to the discussion around \eqref{eq:stoch-ex1} regarding the unspecified parameter $\alpha_0$.
Comparing the weights $(\alpha_1, \alpha_2)$ to the weights obtained in the first experiment, we notice increased importance attributed to the second foreign zero-coupon bond $Q_2(t_0, T)$ in the portfolio. This aligns with an increased probability of $q_2(t)$ being the maximal spread, which can easily be analytically confirmed with the updated collateral spread expectations under unchanged volatilities $\xi_1, \xi_2$.
\par
The standard deviations of the portfolios observed in this experiment, displayed in \Cref{fig:CTDhedge-crossing-SD}, resemble those of \Cref{fig:CTDhedge-constant-SD} in magnitude, with the stochastic portfolio clearly displaying the least variance. However, all deterministic strategies, basic and maximal-spread-following, exhibit more deviation than before, now at similar levels to the strategy of portfolio $\Pi_\none$.
\par
In \Cref{fig:CTDhedge-crossing-E}, we have the portfolio valuations for this scenario, again with the bank account ensuring an initial price of zero and without. Initial-zero portfolios behave similar to the observation of the previous experiment, with the portfolio $\Pi_\none(T) \approx 0.039$ exhibiting the highest value at time $T$, which we understand as the penalty for the imperfect hedge. As before, the deterministic strategy ($\Pi_\det(T) \approx 0.021$) is slightly cheaper than the stochastic strategy ($\Pi_\stoch(T) \approx 0.027$) and the basic strategy ($\Pi_1(T) \approx 0.025$).
\par
The right graph of \Cref{fig:CTDhedge-crossing-E}, which shows the portfolio valuations when the cash account is removed, presents a notably different picture from before. Except for the stochastic strategy, all hedging portfolios are cash-positive at inception. As remarked before, the stochastic portfolio is subject to arbitrary, parallel shifts about the vertical axis by the choice of unspecified parameter $\alpha_0$. The cash-neutral choice, in this case, is given by $\alpha_0 \approx -0.19$.
\section{Conclusions}\label{sec:conclu}
We have demonstrated structural differences between the deterministic and stochastic models for the collateral choice option, and we have shown that assets with the collateral choice option require tailored hedging strategies when no hedging instruments equipped with the same option are available. To this end, we have proposed hedging strategies based on the domestic and foreign zero-coupon bonds associated with the available collateral currencies. A variance-minimizing strategy was proposed based on the stochastic collateral spread model, and we have shown how its hedging weights can be obtained semi-analytically with an application of the common factor approach to CTD discount factor pricing. We have further proposed hedging strategies based on the deterministic collateral spread model and its implied collateral spread crossing times.
Several numerical experiments have been performed in a setting where domestic risk factors are removed, resulting in a scenario where only risks from the alternative collateral currencies enter. The numerical results affirm that the strategy derived from the stochastic model strongly reduces the portfolio variance compared to all other strategies considered. In any case, hedging with a currency associated with a positive collateral spread improved hedging performance over the indiscriminate choice of the domestic currency. 
Future research should consider the effect of the collateral choice option on non-linear products in a broadened scenario where all risk factors are present in the market. 
Furthermore, it remains an important question how the strong assumption of free collateral substitution can be loosened without making the collateral choice option price path dependent and how the collateral spreads can be calibrated in an efficient way.
%
%
%
%%%%%%%%%%%%%%%%%%%%%%%%%%%%%%%%%%%%%%%%%%%%%%%%%%%%%%%%
\section*{Acknowledgments}
This research is part of the ABC--EU--XVA project and has received funding from the European Union's Horizon 2020 research and innovation programme under the Marie Sk\l{}odowska--Curie grant agreement No.\ 813261.

\bibliography{cf-Lit}

\begin{thebibliography}{}

\bibitem[Antonov and Piterbarg, 2014]{PiterbargAntonov}
Antonov, A. and Piterbarg, V. (2014).
\newblock Options for collateral options.
\newblock {\em Risk Magazine}, pages 66--71.

\bibitem[Bielecki and Rutkowski, 2015]{BieleckiRutkowski}
Bielecki, T.~R. and Rutkowski, M. (2015).
\newblock Valuation and hedging of contracts with funding costs and
  collateralization.
\newblock {\em SIAM Journal on Financial Mathematics}, 6(1):594--655.

\bibitem[Brigo and Mercurio, 2006]{brigo2006interest}
Brigo, D. and Mercurio, F. (2006).
\newblock {\em Interest rate models: theory and practice: with smile,
  inflation, and credit}.
\newblock Springer, Berlin New York.

\bibitem[Fujii and Takahashi, 2011]{FujiiRisk}
Fujii, M. and Takahashi, A. (2011).
\newblock Choice of collateral currency.
\newblock {\em Risk Magazine}, 24(1):120--125.

\bibitem[Macey, 2011]{Macey2011}
Macey, G. (2011).
\newblock Pricing with standard {CSA} defined by currency buckets.
\newblock {\em SSRN}.

\bibitem[McCloud, 2013]{McCloud}
McCloud, P. (2013).
\newblock Collateral volatility.
\newblock {\em {SSRN} Electronic Journal}.

\bibitem[Oosterlee and Grzelak, 2019]{GrzelakOosterlee}
Oosterlee, C.~W. and Grzelak, L.~A. (2019).
\newblock {\em Mathematical Modeling And Computation In Finance: With Exercises
  And Python And Matlab Computer Codes}.
\newblock World Scientific Publishing Company.

\bibitem[Piterbarg, 2010]{PiterbargFunding}
Piterbarg, V. (2010).
\newblock Funding beyond discounting: collateral agreements and derivatives
  pricing.
\newblock {\em Risk Magazine}, 23(2):42--48.

\bibitem[Piterbarg, 2012]{PiterbargCooking}
Piterbarg, V. (2012).
\newblock Cooking with collateral.
\newblock {\em Risk Magazine}, 25(8):58--63.

\bibitem[Piterbarg, 2013]{PiterbargSticky}
Piterbarg, V. (2013).
\newblock Stuck with collateral.
\newblock {\em Risk Magazine}, (11):60--65.

\bibitem[Sankovich and Zhu, 2015]{SankovichZhu}
Sankovich, V. and Zhu, Q. (2015).
\newblock Collateral option valuation made easy.
\newblock {\em Risk Magazine}, pages 68 -- 73.

\bibitem[Simmons, 2018]{simmons2018collateral}
Simmons, M. (2018).
\newblock {\em Collateral management: a guide to mitigating counterparty risk}.
\newblock John Wiley \& Sons, Chichester, West Sussex, United Kingdom.

\bibitem[Wolf et~al., 2022]{WolfCF}
Wolf, F.~L., Grzelak, L.~A., and Deelstra, G. (2022).
\newblock Cheapest-to-deliver collateral: a common factor approach.
\newblock {\em Quantitative Finance}, 22(4):707--723.

\end{thebibliography}

%%%%%%%%%%%%%%%%%%%%%%%%%%%%%%%%%%%%%%%%%%%%%%%%%%%%%%%%%%%%%%%%%%%%%%%%%%%%%%%%%%%%%%%%%%%%%%%%%%%%%%%%%%%%%%%%

%%%%%%%%%%%%%%%%%%%%%%%%%%%%%%%%%%%%%%%%%%%%%%%%%%%%%%%%%%%%%%%%%%%%%%%%%%%%%%%%%%%%%%%%%%%%%%%%%%%

\appendix
\section{Proof of \Cref{lem:thetalemma}}\label{appx:thetalemma}
\begin{proof}Throughout this proof, expectations are taken under either the domestic $\Q_0$ measure or the domestic forward $\Q_0^T$ measure, according to the definition of the collateral spreads in \eqref{eq:HWspread}.
\par
1. Let $\theta_i(t)$ be defined as in \eqref{eq:newexacttheta}. Then it holds for every $t\in[t_0, T]$:
	\begin{align}
	\E[q_i(t)] &= q_i(t_0) \e^{-\kappa_i (t-t_0)} + \kappa_i \int_{t_0}^t \theta_i(z) \e^{-\kappa_i (t-z)} \d z \nonumber \\
	&= \hat q_i(t_0)\e^{-\kappa_i (t-t_0)} + \kappa_i \int_{t_0}^t \bigl(\hat q_i(z) + \frac\partial{\partial z} \frac1{\kappa_i} \hat q_i(z)\bigr) \e^{-\kappa_i (t-z)} \d z \nonumber \\
	&= \hat q_i(t_0)\e^{-\kappa_i (t-t_0)} + \e^{-\kappa_i t} \Bigl(\int_{t_0}^t \hat q_i(z)\kappa_i\e^{\kappa_i z} \d z + \int_{t_0}^t \frac\partial{\partial z} \hat q_i(z) \e^{\kappa_i z} \d z\Bigr).
	\end{align}%
The result follows from the integration by parts formula
\begin{equation}
\int_{t_0}^t \hat q_i(z)\kappa_i\e^{\kappa_i z} \d z + \int_{t_0}^t \frac\partial{\partial z} \hat q_i(z) \e^{\kappa_i z} \d z 
= \Bigl[ \hat q_i(z) \e^{\kappa_i z}\Bigr]^t_{t_0}
= \hat q_i(t)\e^{\kappa_i t} - \hat q_i(t_0)\e^{\kappa_i t_0}.
\end{equation}
\par
2. Let $\mathcal{T}=\{t_0, t_1, \dots, t_R = T\}$ be the time discretization of $[t_0, T]$ and let $\theta_i(t)$ be defined piecewise constant on the intervals $(t_{k-1}, t_k]$ for every $k\geq 1$ as described in \eqref{eq:newtheta}. For every $t_k$, $k\geq 1$ it holds that
\begin{align}
	\E[q_i(t_k)] &= q_i(t_0)\e^{-\kappa_i (t_k-t_0)} + \kappa_i \int_{t_0}^{t_k} \theta_i(z) \e^{-\kappa_i (t_k-z)} \d z \nonumber \\
%
%	&= \hat Q_i(t_0, T)\e^{-\kappa_i (t_k-t_0)} + \sum\limits_{j=1}^k \kappa_i 
%				\int_{t_{j-1}}^{t_j} \e^{-\kappa_i (t_k - z)} \frac{\hat q_i(t_j) \e^{\kappa_i t_j} - \hat q_i(t_{j-1}) e^{\kappa_i t_{j-1}}}{\e^{\kappa t_j} - \e^{\kappa t_{j-1}}} \d z \nonumber \\
%
	&= \hat q_i(t_0)\e^{-\kappa_i (t_k-t_0)} + \sum\limits_{j=1}^k \kappa_i \e^{-\kappa_i t_k} \frac{\hat q_i(t_j) \e^{\kappa_i t_j} - \hat q_i(t_{j-1}) e^{\kappa_i t_{j-1}}}{\e^{\kappa t_j} - \e^{\kappa t_{j-1}}}
					\int_{t_{j-1}}^{t_j} \e^{\kappa_i z} \d z \nonumber \\
	&= \hat q_i(t_0)\e^{-\kappa_i (t_k-t_0)} + \e^{-\kappa_i t_k} \sum\limits_{j=1}^k \bigl( \hat q_i(t_j) \e^{\kappa_i t_j} - \hat q_i(t_{j-1}) e^{\kappa_i t_{j-1}} \bigr) \nonumber \\
	&= \hat q_i(t_k),
		\end{align}%
		where the final step is based on a telescopic sum.
\end{proof}
%%%%%%%%%%%%%%%%%%%%%%%%%%%%%%%%%%%%%%%%%%%%%%%%%%%%%%%%%%
%%%%%%%%%%%%%%%%%%%%%%%%%%%%%%%%%%%%%%%%%%%%%%%%%%%%%%%%%%
%%%%%%%%%%%%%%%%%%%%%%%%%%%%%%%%%%%%%%%%%%%%%%%%%%%%%%%%%%
%%%%%%%%%%%%%%%%%%%%%%%%%%%%%%%%%%%%%%%%%%%%%%%%%%%%%%%%%%
%
\section{Computation of the portfolio variance minimization}%\Cref{lem:Qpricelemma}}
\label{appx:varianceminimization}
In the following, we give the mathematical details of the portfolio variance minimization approach. By the end of this section, we obtain a semi-analytical approach to the hedging weights obtained from the variance minimization in \eqref{eq:minimizevar}.
%
%\begin{equation}\label{eq:minimizevarAPPX}
%\underset{\alpha \in [-1, 1]^N}{\mathrm{argmin}}\Bigl\{ \Var[\pi(\alpha)] \Bigr\} 
%= \underset{\alpha \in [-1, 1]^N}{\mathrm{argmin}}\Bigl\{\Var\Bigl[\e^{-\int_{t_0}^T \max\bigl(0, q_1(s), \dots, q_N(s)\bigr)\d s}
%+ \sum\limits_{i=1}^N \alpha_i \e^{-\int_{t_0}^T q_i(s) \d s}\Bigr]\Bigr\} .
%\end{equation}
%
The first result is a transformation of the variance $\Var[\pi(\alpha)]$ into the object of interest $f(\alpha)$ given in \eqref{eq:f-varmin}.
\begin{lem}
Minimizing the variance of the random variable $\pi(\alpha)$ given in \Cref{defi:performancevar}, is equivalent to minimizing
\begin{align}\label{eq:f-varminAPPX}
f(\alpha) = &\sum_{i=0}^N \alpha_i^2 \Var[\widetilde Q_i(t_0, T)] +  \sum_{\overset{i,j=0}{j\neq i}}^N \alpha_i \alpha_j \Cov[\widetilde Q_i(t_0, T), \widetilde Q_j(t_0, T)] \nonumber \\
&+ 2 \sum_{i=0}^N \alpha_i \Cov[\widetilde P^c(t_0, T), \widetilde Q_i(t_0, T)].
\end{align}
%\begin{align}\label{eq:varpi}
%\Var\bigl[\pi(\alpha)\bigr] = &\Var\Bigl[\e^{-\int_{t_0}^T \max(0, q_1(s), \dots, q_N(s)) \d s}\Bigr] + 2 \sum\limits_{i,j=1}^N \alpha_i \alpha_j\Cov\Bigl[ \e^{-\int_{t_0}^T q_i(s) \d s}, \e^{-\int_{t_0}^T q_j(s) \d s}\Bigr] \nonumber \\
%&+ \sum\limits_{i=1}^N \alpha_i \Cov[\e^{-\int_{t_0}^T \max(0, q_1(s), \dots, q_N(s)) \d s}, \e^{-\int_{t_0}^T q_i(s) \d s}].
%\end{align}
\end{lem}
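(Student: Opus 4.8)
The plan is to exploit that $\mathcal{C}_\stoch(t_0)$ is a deterministic constant, so it leaves the variance unchanged, and then to expand the variance of the linear combination $\widetilde P^c(t_0,T) + \sum_{i=0}^N \alpha_i \widetilde Q_i(t_0,T)$ by bilinearity of the covariance.

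First I would record that all second moments involved are finite, so that the variance decomposition is legitimate: the integrated Hull--White processes $\int_{t_0}^T q_i(s)\,\d s$ and $\int_{t_0}^T r_0(s)\,\d s$ are Gaussian, hence $\widetilde Q_i(t_0,T)$ is lognormal and possesses moments of every order, while $\widetilde P^c(t_0,T) \le \exp\bigl(-\int_{t_0}^T r_0(s)\,\d s\bigr)$ (since $\max(0,q_1,\dots,q_N)\ge 0$) is dominated by a lognormal variable. Therefore $\Var[\pi(\alpha)]$, the $\Var[\widetilde Q_i(t_0,T)]$ and all covariances appearing in $f$ are well defined and finite for every $\alpha\in[-1,1]^{N+1}$.

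Next, since $\mathcal{C}_\stoch(t_0)$ is by its definition a fixed real number --- it is assembled from the expectations appearing in $P^c(t_0,T)$ and $Q_i(t_0,T)$ together with the deterministic bond $P(t_0,T)$ --- translating a random variable by it does not change the variance, so that
\begin{equation}
\Var[\pi(\alpha)] = \Var\Bigl[\widetilde P^c(t_0,T) + \sum_{i=0}^N \alpha_i \widetilde Q_i(t_0,T)\Bigr].
\end{equation}
Expanding the right-hand side with the bilinearity of the covariance yields
\begin{align}
\Var[\pi(\alpha)] &= \Var[\widetilde P^c(t_0,T)] + \sum_{i=0}^N \alpha_i^2\,\Var[\widetilde Q_i(t_0,T)] \nonumber\\
&\quad + \sum_{\overset{i,j=0}{j\neq i}}^N \alpha_i\alpha_j\,\Cov[\widetilde Q_i(t_0,T),\widetilde Q_j(t_0,T)] + 2\sum_{i=0}^N \alpha_i\,\Cov[\widetilde P^c(t_0,T),\widetilde Q_i(t_0,T)] \nonumber\\
&= \Var[\widetilde P^c(t_0,T)] + f(\alpha).
\end{align}

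Finally I would conclude by observing that $\Var[\widetilde P^c(t_0,T)]$ does not depend on $\alpha$, so $\Var[\pi(\alpha)]$ and $f(\alpha)$ differ by a constant on the feasible set $[-1,1]^{N+1}$; hence they attain their minimum at exactly the same points, which is the claimed equivalence. There is no genuine obstacle in this argument --- the only step worth a word of justification is the finiteness of the second moments mentioned above, which is what makes the variance decomposition an identity between finite quantities rather than a formal manipulation.
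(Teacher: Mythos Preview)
Your proof is correct and follows essentially the same route as the paper: drop the deterministic constant $\mathcal{C}_\stoch(t_0)$, expand the variance of the remaining linear combination by bilinearity, and discard the $\alpha$-free term $\Var[\widetilde P^c(t_0,T)]$. Your extra remark on finiteness of second moments is a welcome justification that the paper omits.
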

In the following, we do not write the repeated arguments $(t_0, T)$ of the random variables $\widetilde Q_i(t_0, T)$ and $\widetilde P^c(t_0, T)$.
\begin{proof}
 %Furthermore, we use short-hand notation $\alpha \widetilde Q := \sum_{i=0}^N \alpha_i \widetilde Q_i(t_0, T)$.
The proof is obtained by expansion of the variance of $\pi(\alpha)$,
\begin{align}
\Var[\pi(\alpha)] &= \Var[\widetilde P^c +\sum_{i=0}^N \alpha_i \widetilde Q_i]  \nonumber \\
% &= \Var[\widetilde P^c] + \Var[\sum_{i=0}^N \alpha_i \widetilde Q_i] + 2 \sum_{i=0}^N \alpha_i \Cov[\widetilde P^c, \widetilde Q_i] \nonumber \\
%&= \Var[\widetilde P^c] + \sum_{i=0}^N \alpha_i^2 \Var[\widetilde Q_i] + 2 \sum_{\overset{i,j=0}{j\neq i}}^N \alpha_i \alpha_j \Cov[\widetilde Q_i, \widetilde Q_j] + 2 \sum_{i=0}^N \alpha_i \Cov[\widetilde P^c, \widetilde Q_i]. \nonumber
&= \Var[\widetilde P^c] + \sum_{{i,j=0}}^N \alpha_i \alpha_j \Cov[\widetilde Q_i, \widetilde Q_j] +  2 \sum_{i=0}^N \alpha_i \Cov[\widetilde P^c,\widetilde Q_i]. 
\end{align}
The result follows by the removal of the first term, $\Var[\widetilde P^c]$, which does not contain any weights $\alpha_i$, $i\geq 0$.
\end{proof}
Towards a semi-analytical expression of the function $f(\alpha)$, we begin by analysing the covariance term $\Cov[\widetilde Q_i, \widetilde Q_j]$, which can, for all $i, j \geq 0$, be expressed  by 
\begin{align}\label{eq:covQiQj}
\Cov[\widetilde Q_i, \widetilde Q_j] &= \E^{\Q_0}\bigl[ \bigl(\e^{-\int_{t_0}^T q_i(s)\d s}\e^{-\int_{t_0}^T r_0(s)\d s} \bigr)\bigl(\e^{-\int_{t_0}^T q_j(s)\d s}\e^{-\int_{t_0}^T r_0(s)\d s} \bigr)\bigr] \nonumber \\
&\quad - \E^{\Q_0}\bigl[\e^{-\int_{t_0}^T q_i(s)\d s}\e^{-\int_{t_0}^T r_0(s)\d s}\bigr]\E^{\Q_0}\bigl[\e^{-\int_{t_0}^T q_j(s)\d s}\e^{-\int_{t_0}^T r_0(s)\d s}\bigr] \nonumber \\
&= \E^{\Q_0}\bigl[\e^{-\int_{t_0}^T q_i(s)\d s}\e^{-\int_{t_0}^T q_j(s)\d s}\bigr]\E^{\Q_0}\bigl[ \bigl(\e^{-\int_{t_0}^T r_0(s)\d s}\bigr)^2\bigr] \nonumber \\
&\quad - \E^{Q_0}\bigl[\e^{-\int_{t_0}^T q_i(s)\d s}\bigr]\E^{Q_0}\bigl[\e^{-\int_{t_0}^T q_j(s)\d s}\bigr]\E^{Q_0}\bigl[\e^{-\int_{t_0}^T r_0(s)\d s}\bigr]^2.
\end{align}
%As discussed in \Cref{sec:df}, the expectations related to the collateral spreads can be factored from the expectations related to the domestic rate with the measure $\Q_0\in\{\Q_0, \Q_0^T\}$, according to the independence assumption made. 
Above, we used the assumption of independence under $\Q_0$ between the collateral spreads and domestic interest rate which was made in \Cref{ref:ctdhedging}.
We note that \eqref{eq:covQiQj} holds even for $i=0$ or $j=0$, since $q_0(t) = 0$ and thus $\E[\e^{-\int_{t_0}^T q_0(s)\d s}] = 1$.
\par
The collateral spreads $q_i(t)$, $i\geq 1$ are defined as 1-factor Hull--White processes under the particular choice of measure $\Q_0$ in \eqref{eq:HWspread}, and $r_0(t)$ is defined as an analogous 1-factor Hull--White process under the measure $\Q_0$ in \eqref{eq:r0dynamics}. Therefore, the first moments $\E^{\Q_0}[\e^{-\int_{t_0}^T q_i(s)\d s}]$ and $\E^{\Q_0}[\e^{-\int_{t_0}^T r_0(s)\d s}]$ in \eqref{eq:covQiQj} can be directly obtained from the well-known pricing formula for a zero-coupon bond under the Hull--White model, see, for example, \cite{GrzelakOosterlee}.
\par
Let us now concentrate on the second-order terms. In the following lemma, we consider the mixed term $\E^{\Q_0}\bigl[\e^{-\int_{t_0}^T q_i(s)\d s}\e^{-\int_{t_0}^T q_j(s)\d s}\bigr]$; the second moment of $\widetilde Q_0$ can be derived similarly.
\begin{lem}
The mixed second-order term $\E^{\Q_0}\bigl[\e^{-\int_{t_0}^T q_i(s)\d s}\e^{-\int_{t_0}^T q_j(s)\d s}\bigr]$ is, for every $i, j \geq 1$, given by
\begin{equation}\label{eq:lognormal}
\E^{\Q_0}\bigl[\e^{-\int_{t_0}^T q_i(s)\d s}\e^{-\int_{t_0}^T q_j(s)\d s}\bigr] = \exp\Bigl( \mu_{i,j} + \frac12 v_{i,j}\Bigr),
\end{equation}
where
\begin{equation}\label{eq:lognormalexpec}
\mu_{i,j} := - \E^{\Q_0}\bigl[\int_{t_0}^T \bigl(q_i(s) + q_j(s)\bigr) \d s\bigr] = - \int_{t_0}^T \bigl(\hat q_i(s) + \hat q_j(s)\bigr) \d s,
\end{equation}
and
\begin{equation}\label{eq:lognormalvar}
v_{i,j} := \Var\bigl[\int_{t_0}^T q_i(s)\d s\bigr] + \Var\bigl[\int_{t_0}^T q_j(s)\d s\bigr] + 2\, \Cov\bigl[\int_{t_0}^T q_i(s)\d s, \int_{t_0}^T q_j(s)\d s\bigr].
\end{equation}
Furthermore, it holds for every $i,j\geq 1$,
\begin{align}\label{eq:spreadintegcov}
\Cov\bigl[\int_{t_0}^T q_i(s) \d s, \int_{t_0}^T q_j(s) \d s\bigr] = &\frac{\xi_i \xi_j \rho_{i,j}}{\kappa_i + \kappa_j} \Biggl( \frac{\e^{-\kappa_j \tau}}{\kappa_j^2} - \frac{1}{\kappa_j^2} + \frac{\tau}{\kappa_j} +\frac{1}{\kappa_i^2} - \frac{\e^{-\kappa_i \tau}}{\kappa_i^2} + \frac{\tau}{\kappa_i} \nonumber \\
&\quad - \frac{\e^{-(\kappa_i + \kappa_j)\tau}}{\kappa_i\kappa_j} + \frac{\e^{-\kappa_i \tau}}{\kappa_i \kappa_j} 
+ \frac{e^{-\kappa_j \tau}}{\kappa_i\kappa_j} - \frac{1}{\kappa_i \kappa_j}\Biggr), 
\end{align}
where $\tau := T - t_0$, and all variances, covariances are taken under the $\Q_0$ measure.
\end{lem}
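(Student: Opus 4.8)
The plan is to exploit the Gaussianity of the Hull--White spreads. Since each $q_i$ solves the linear SDE \eqref{eq:HWspread}, the vector $(q_i(s),q_j(s))$ is jointly Gaussian for every $s$ (cf.\ \eqref{eq:multivargausscf}), and hence the time integral $Z := -\int_{t_0}^T\bigl(q_i(s)+q_j(s)\bigr)\ds$ is a scalar Gaussian random variable. The product $\e^{-\int_{t_0}^T q_i(s)\ds}\e^{-\int_{t_0}^T q_j(s)\ds}$ therefore equals $\e^{Z}$ with $Z$ normal, i.e.\ it is log-normal, and the standard identity $\E[\e^{Z}]=\exp\bigl(\E[Z]+\tfrac12\Var[Z]\bigr)$ gives \eqref{eq:lognormal} with $\mu_{i,j}=\E[Z]$ and $v_{i,j}=\Var[Z]$.

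It then remains to identify $\mu_{i,j}$ and $v_{i,j}$. For the mean, Fubini's theorem gives $\E[Z]=-\int_{t_0}^T\bigl(\E^{\Q_0}[q_i(s)]+\E^{\Q_0}[q_j(s)]\bigr)\ds$, and the calibration constraint \eqref{eq:stochtodetspread}, i.e.\ $\E[q_i(t)]=\hat q_i(t)$, turns this into \eqref{eq:lognormalexpec}. For the variance, bilinearity of $\Var$ over a sum yields \eqref{eq:lognormalvar} immediately.

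For the covariance formula \eqref{eq:spreadintegcov} I would pass to the Hull--White decomposition of \Cref{rem:HW-decomp}, $q_i(t)=\hat q_i(t)+u_i(t)$ with $u_i$ the centred Ornstein--Uhlenbeck part $\d u_i=-\kappa_i u_i\,\dt+\xi_i\,\d W_i$, $u_i(t_0)=0$; the deterministic $\hat q_i$ drops out of the covariance, so it suffices to compute $\Cov\bigl[\int_{t_0}^T u_i\,\ds,\int_{t_0}^T u_j\,\ds\bigr]$. Using $u_i(s)=\xi_i\int_{t_0}^s\e^{-\kappa_i(s-r)}\,\d W_i(r)$ and the stochastic Fubini theorem, $\int_{t_0}^T u_i(s)\,\ds=\xi_i\int_{t_0}^T\kappa_i^{-1}\bigl(1-\e^{-\kappa_i(T-r)}\bigr)\,\d W_i(r)$, and the It\^o isometry for the correlated drivers $\d[W_i,W_j]_t=\rho_{i,j}\,\dt$ gives $\Cov\bigl[\int_{t_0}^T u_i\,\ds,\int_{t_0}^T u_j\,\ds\bigr]=\xi_i\xi_j\rho_{i,j}(\kappa_i\kappa_j)^{-1}\int_{t_0}^T\bigl(1-\e^{-\kappa_i(T-r)}\bigr)\bigl(1-\e^{-\kappa_j(T-r)}\bigr)\,\dr$. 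Substituting $w=T-r$ and expanding the product leaves four elementary exponential integrals on $[0,\tau]$, $\tau=T-t_0$, which reassemble into \eqref{eq:spreadintegcov}. An equivalent route, making the $(\kappa_i+\kappa_j)^{-1}$ prefactor of \eqref{eq:spreadintegcov} appear transparently, is to integrate the Ornstein--Uhlenbeck cross-covariance $\Cov[u_i(s),u_j(t)]=\xi_i\xi_j\rho_{i,j}(\kappa_i+\kappa_j)^{-1}\e^{-\kappa_i s-\kappa_j t}\bigl(\e^{(\kappa_i+\kappa_j)(s\wedge t)}-\e^{(\kappa_i+\kappa_j)t_0}\bigr)$ over the square $[t_0,T]^2$, split into the regions $s<t$ and $s>t$. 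The individual variances $\Var\bigl[\int_{t_0}^T q_i\,\ds\bigr]$ entering \eqref{eq:lognormalvar} are the $i=j$ specialization (with $\rho_{i,i}=1$), and the degenerate cases $i=0$ or $j=0$ follow from $q_0\equiv0$, so that $\E\bigl[\e^{-\int_{t_0}^T q_0(s)\ds}\bigr]=1$.

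The only genuine labour is the bookkeeping in this last step: carrying out the region split (or the single-integral expansion) and collecting the ten exponential/polynomial terms of \eqref{eq:spreadintegcov} without sign slips. The structural ingredients --- Gaussianity $\Rightarrow$ log-normality, Fubini, and the It\^o isometry --- are immediate and carry essentially no risk.
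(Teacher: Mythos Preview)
Your proposal is correct and follows essentially the same approach as the paper: Gaussianity of the integrated spreads gives log-normality and hence \eqref{eq:lognormal}--\eqref{eq:lognormalvar}, and the covariance \eqref{eq:spreadintegcov} is obtained by integrating the pointwise cross-covariance $\Cov[q_i(u),q_j(v)]$ over $[t_0,T]^2$, with that cross-covariance derived via It\^o's isometry after decorrelating the Brownian drivers. Your alternative single-integral route via stochastic Fubini is a minor computational variant that the paper does not use, but it is equally valid and leads to the same result.
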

\begin{proof}
%We define $q_{i,j} (s) := q_i(s) + q_j(s)$ for every time $s\in[t_0, T]$. 
Since every $q_i(s)$ is a normal random variable, so is $q_i(s) + q_j(s)$ and also $\int_{t_0}^T q_i(s) + q_j(s)\d s$. Consequently, $\E^{\Q_0}[\exp(-\int_{t_0}^T q_i(s) + q_j(s)\d s)]$ is the expectation of a log-normal random variable,  thus \eqref{eq:lognormal} holds and \eqref{eq:lognormalexpec}, \eqref{eq:lognormalvar} immediately follow. Finally, \eqref{eq:spreadintegcov} is obtained from the observation 
\begin{equation}
\Cov\bigl[\int_{t_0}^T q_i(s) \d s, \int_{t_0}^T q_j(s) \d s\bigr] = \int_{t_0}^T \int_{t_0}^T \Cov[q_i(u), q_j(v)] \d u \d v,
\end{equation}
and the explicit solution of $\Cov[q_i(u), q_j(v)]$, given by
%It is given by %After splitting the integrals With an application of It\^o's isometry and use of independent increments of Brownian motion as
\begin{align}\label{eq:covqiqj}
\Cov\bigl[q_i(u), q_j(v)\bigr] &= \xi_i \xi_j \e^{-(\kappa_i u + \kappa_j v)} \Cov \bigl[ \int_{t_0}^u \e^{\kappa_i z} \d W_i(z), \int_{t_0}^v \e^{\kappa_j y} \d W_j(y)\bigr] \nonumber \\
&= \frac{\xi_i\xi_j\rho_{i,j}}{\kappa_i + \kappa_j} \e^{-(\kappa_i u + \kappa_j v)} \Bigl( \e^{(u \wedge v)(\kappa_i + \kappa_j)} - \e^{t_0 (\kappa_i + \kappa_j)} \Bigr).
\end{align}
The latter result is obtained by using that  $W_j \overset{d}= \rho_{i,j}W_i + \sqrt{1 - \rho_{i,j}^2} W^\bot$, with a Brownian motion $W^\bot$ independent of $W_i$, which enables an application of It\^o's isometry. 
\end{proof}
This concludes the analytical expressions in the variance minimization approach. It remains to find an expression for the term $\Cov[\widetilde P^c,\widetilde Q_i]$, $i\geq 0$.
\par
We begin with the case $i=0$. Following the same factorization argument as in \eqref{eq:covQiQj}, we obtain
\begin{equation}
\Cov[\widetilde P^c, \widetilde Q_0] = \E^{\Q_0}\bigl[\e^{-\int_{t_0}^T M(s) \d s}\bigr]\Bigl( \E^{\Q_0}\bigl[\widetilde Q_0^2\bigr] -  \E^{\Q_0}\bigl[\widetilde Q_0\bigr]^2 \Bigr),
\end{equation}
where it was defined that $M(s) = \max(0, q_1(s), \dots, q_N(s))$. We showed previously how the moments of $\widetilde Q_0$ can be obtained. The remaining term is precisely the CTD discount factor $\CTD(t_0, T)$ and we can use the common factor approximation, given in \eqref{eq:CFCTD}, to obtain a semi-analytical approximation.
\par
For the case that $i\geq 1$, it holds by the same line of reasoning that
\begin{align}
\Cov\bigl[\widetilde P^c, \widetilde Q_i\bigr] &= \E^{\Q_0}\bigl[\e^{-\int_{t_0}^T M(s) \d s} \widetilde Q_i \widetilde Q_0 \bigr] - \E^{\Q_0}\bigl[\e^{-\int_{t_0}^T M(s) \d s}\bigr] \E^{\Q_0}\bigl[\widetilde Q_i \bigr] \E^{\Q_0}\bigl[ \widetilde Q_0 \bigr].
%
%\Cov[\e^{-\int_{t_0}^T M(s) \d s} \widetilde Q_0, \widetilde Q_i] \nonumber \\
%&= \Cov[\e^{-\int_{t_0}^T M(s) \d s}, \widetilde Q_i] + \Cov[ \widetilde Q_0, \widetilde Q_i] \nonumber \\
%&= \E[
\end{align}
Most of these terms has been previously treated, except for $\E^{\Q_0}[\e^{-\int_{t_0}^T M(s) \d s} \widetilde Q_i \widetilde Q_0]$. The missing term is
\begin{align}
&\E^{\Q_0}\bigl[\e^{-\int_{t_0}^T M(s) \d s} \widetilde Q_i \widetilde Q_0\bigr] \nonumber \\
&= \E^{\Q_0}\Bigl[ \e^{-\int_{t_0}^T \max\bigl(q_i(s), q_1(s) + q_i(s), \dots, q_N(s) + q_i(s)\bigr)} \e^{-\int_{t_0}^T 2r_0(s)\d s}\Bigr] \nonumber \\
&= \E^{\Q_0}\Bigl[ \e^{-\int_{t_0}^T \max\bigl(q_i(s), q_1(s) + q_i(s), \dots, q_N(s) + q_i(s)\bigr)}\Bigr]\E^{\Q_0}\Bigl[ \e^{-\int_{t_0}^T 2r_0(s)\d s}\Bigr], 
\end{align}
which we recognize as closely related to the CTD discount factor. The procedure of the common factor approximation, detailed in \cite{WolfCF}, can be adapted for this special case. Herefore, it is only necessary to construct the joint marginal distributions
\begin{equation}
\bigl(q_i(s), q_1(s) + q_i(s), \dots, q_N(s) + q_i(s) \bigr) \sim \mathcal{N}\Bigl( \mu(s; i), \Sigma(s; i) \Bigr),
\end{equation}
where
\begin{equation}
\mu(s; i) := \bigl(\hat q_i(s), \hat q_1(s) + \hat q_i(s), \dots, \hat q_N(s) + \hat q_i(s)\bigr),
\end{equation}
and the covariance matrix $\Sigma(s;i)$ can be directly obtained from \eqref{eq:covqiqj}.
\par
We have thus shown how the variance minimization term can be expressed with a mixture of analytical results and semi-analytical approximations. Terms involving only the random variables $\widetilde Q_i$, associated with domestic and foreign zero-coupon bonds, are solved analytically and terms involving the random variable $\widetilde P^c$, associated with the asset equipped with a collateral choice option, are approximated semi-analytically.

\end{document}